\documentclass{IEEEtran4PSCC}

\usepackage{cite}
\usepackage[pdftex]{graphicx}
\usepackage{epsfig} 
\usepackage{amsmath} 
\usepackage{amssymb}  
\usepackage{amsfonts}
\usepackage{amsthm}
\usepackage{mathtools}  
\usepackage{braket}
\usepackage{optidef,booktabs}
\usepackage{cite}
\usepackage{url}
\usepackage{lipsum} 
\usepackage{algorithm}
\usepackage{algpseudocode}
\usepackage[utf8]{inputenc} 
\usepackage[font=small,skip=1pt]{subcaption}
\usepackage[font=small,skip=1pt]{caption}
\usepackage{tabularray}
\usepackage{tikz}
\usetikzlibrary{positioning,calc,arrows.meta}
\usepackage{circuitikz}
\usepackage{xcolor}
\usepackage{nicefrac}
\usepackage{hyperref}
\usepackage{bm}
\usetikzlibrary{shapes.symbols}

\hypersetup{ 		
	colorlinks=true,        
	linkcolor=blue,
	urlcolor=blue, 
	citecolor=blue     		
}
\def \ebr#1{{\color{blue}#1}} 
\def \ebr#1{{\color{red}#1}} 

\providecommand{\abs}[1]{\lvert#1\rvert}
\providecommand{\norm}[1]{\lVert#1\rVert}
\providecommand{\SRG}[1]{\operatorname{SRG}(#1)}

\newtheorem{remark}{Remark}
\newtheorem{example}{Example}

\newtheorem{lemma}{Lemma}
\newtheorem{theorem}{Theorem}

\newtheorem{definition}{Definition}
\newtheorem{fact}{Property}

\usepackage{tikz}
\usepackage{circuitikz}
\usepackage{graphics,graphicx} 

\usetikzlibrary{arrows.meta,    
                calc, chains,   
                positioning,    
                quotes          
                }
 \definecolor{color1bg}{HTML}{77AC30}
\definecolor{color2bg}{HTML}{D95319}
\definecolor{color3bg}{HTML}{0072BD}
\definecolor{color4bg}{HTML}{A2142F}
\definecolor{backgroundblue}{rgb}{0.6, 0.8, 1}
\definecolor{backgroundorange}{rgb}{1, 0.752, 0}

\tikzset{sin v source/.style={circle,draw,append after command={
    \pgfextra{
    \draw
      ($(\tikzlastnode.center)!0.5!(\tikzlastnode.west)$)
       arc[start angle=180,end angle=0,radius=0.425ex] 
      (\tikzlastnode.center)
       arc[start angle=180,end angle=360,radius=0.425ex]
      ($(\tikzlastnode.center)!0.5!(\tikzlastnode.east)$) ;}},scale=1.5,}}
      
\usepackage{multicol}

\tikzset{
  v_sin/.pic={
  \draw (0,0) sin (1,1) cos (2,0) sin (3,-1) cos (4,0);
  \draw (2,0) circle (2.5);
  }
}
\tikzset{
  abc_dq/.pic={
  \draw (-0.5,0) rectangle (0.5,-0.7);
  \draw (-0.5,0) -- (0.5,-0.7);
  \node at (-0.2,-0.5) { $dq$};
  \node at (0.2,-0.2) {$abc$};
  }
}

\makeatletter
\let\old@ps@headings\ps@headings
\let\old@ps@IEEEtitlepagestyle\ps@IEEEtitlepagestyle
\def\psccfooter#1{%
    \def\ps@headings{%
        \old@ps@headings%
        \def\@oddfoot{\strut\hfill#1\hfill\strut}%
        \def\@evenfoot{\strut\hfill#1\hfill\strut}%
    }%
    \def\ps@IEEEtitlepagestyle{%
        \old@ps@IEEEtitlepagestyle%
        \def\@oddfoot{\strut\hfill#1\hfill\strut}%
        \def\@evenfoot{\strut\hfill#1\hfill\strut}%
    }%
    \ps@headings%
}
\makeatother

\psccfooter{%
        \parbox{\textwidth}{\hrulefill \\ \small{24th Power Systems Computation Conference} \hfill \begin{minipage}{0.2\textwidth}\centering \vspace*{4pt} \includegraphics[scale=0.06]{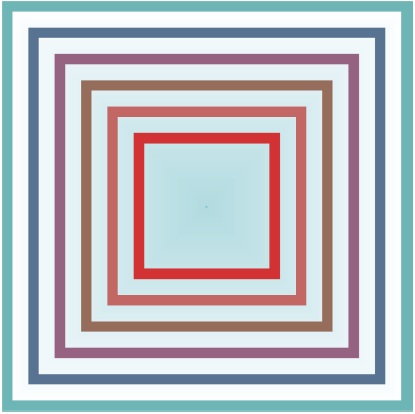}\\\small{PSCC 2026} \end{minipage} \hfill \small{Limassol, Cyprus --- June 8 -- June 12, 2026}}%
}

\begin{document}
%


\title{On the Impact of Operating Points on Small-Signal Stability: Decentralized Stability Sets via Scaled Relative Graphs}

\author{
\IEEEauthorblockN{Eder Baron-Prada$^{1,2}$, Adolfo Anta$^{1}$, Florian Dörfler$^{2}$}
\IEEEauthorblockA{$^1$Austrian Institute of Technology (AIT), 1210 Vienna, Austria }
\IEEEauthorblockA{$^2$Automatic Control Laboratory, ETH Zurich, 8092 Zurich, Switzerland}
}

\maketitle

\begin{abstract}
This paper presents a decentralized frequency-domain framework to characterize the influence of the operating point on the small-signal stability of converter-dominated power systems. The approach builds on Scaled Relative Graph (SRG) analysis, extended here to address Linear Parameter-Varying (LPV) systems. By exploiting the affine dependence of converter admittances on their steady-state operating points, the centralized small-signal stability assessment of the grid is decomposed into decentralized, frequency-wise geometric tests. Each converter can independently evaluate its feasible stability region, expressed as a set of linear inequalities in its parameter space. The framework provides closed-form geometric characterizations applicable to both grid-following (GFL) and grid-forming (GFM) converters, and validation results confirm its effectiveness.
\end{abstract}


\begin{IEEEkeywords}
Scaled Relative Graphs, Decentralized Stability.
\end{IEEEkeywords}


\section{Introduction}
 Renewable energy resources are now central elements of modern grids~\cite{Milano2018}, displacing synchronous generators with converters operating at millisecond time scales~\cite{Milano2018}. This shift introduces new small-signal stability challenges: converters interact strongly with network impedances and with each other, often giving rise to oscillations and resonance phenomena~\cite{dong2023analysis,Cheng2023}. 
Moreover, even slight variations in converter operation setpoints, grid impedance, or controller gains can significantly shift small-signal stability margins.

In practice, the integration of renewable plants still relies on interconnection studies, where stability is assessed by simulating different converter setpoints and grid conditions~\cite{nerc}. Although valuable, this approach is both computationally demanding and incomplete. The difficulty is compounded by the \emph{curse of dimensionality}: as the number of converters increases, the operating setpoint space becomes high-dimensional, and the number of simulations required to cover it grows exponentially. For instance, ten converters with six setpoints each already produce sixty million possible combinations. Even with substantial computational resources, such studies can only sample this space sparsely, leaving small-signal stability boundaries largely unexplored~\cite{birchfield2023review}. This challenge is pronounced in renewable-rich grids, where operating points shift rapidly in response to changes in generation and demand.

Traditional small-signal stability tools, such as eigenvalue analysis, generalized Nyquist criterion and time-domain simulations, remain useful but exhibit fundamental limitations in this new setting~\cite{Tao2022,Fan2020_problemsAdmittance}.  These centralized approaches typically yield only an incomplete outcome: stable or unstable for some chosen operation parameters.  Moreover, as the number of converters increases, the dimensionality of system models grows rapidly, making centralized analysis computationally intensive and often impractical\cite{Molinas2020}. 

To mitigate these two issues, several \emph{decentralized stability certificates} have been proposed to assess converter-grid interactions locally. Passivity-based conditions~\cite{Wang2024Limitations,Wang2023_Passivity,Haberle2025}, small-gain and phase-margin criteria~\cite{huang2024gain}, and geometric-based approaches\cite{Baron2025decentralized,Feng2025} offer scalable and interpretable stability guarantees. These certificates enable decentralized verification of small-signal stability by imposing local frequency-domain conditions on each converter. However, most existing methods apply only to linear time-invariant (LTI) models and fixed operating points, and thus cannot capture the parameter dependence intrinsic to converter behavior. {Hence, it provides only limited insight into robustness with respect to different operating points.} 

What is needed are analytical tools that can characterize stability margins across operating ranges. This work addresses this gap by developing a systematic framework tailored to converter-dominated grids. Because converter dynamics vary with operating setpoints, they are naturally described by LPV models. Building on this observation, we extend frequency-wise SRGs, a novel tool for analyzing LTI systems~\cite{Baron2025SRG,Chen2025,Baron2025SGP,zhang2025phantom,baronprada2026powergrids}, to LPV systems. This extension enables the direct characterization of small-signal stability across operating ranges without resorting to exhaustive scenario enumeration.

Furthermore, we introduce geometric descriptions of decentralized stability sets, which capture feasible operating regions as functions of system parameters. These sets define secure-operation boundaries and characterize small-signal stability margins. 
They also support corrective actions when operation approaches instability, as operators can adjust converter setpoints to remain within certified regions. The proposed framework is validated for both GFM and GFL converters, demonstrating its applicability across control architectures.

\section{Preliminaries}\label{subsec:basics}
\subsection{Basic Notation}
The mathematical foundation of this work builds upon LTI and LPV system representations. We consider LTI systems with state-space representation:
\begin{align*}
\dot{x} = Ax + B\hat{u},\quad \hat{y} = Cx + D\hat{u},
\end{align*}
where $x \in \mathbb{R}^m$ represents the state vector, and the matrices $(A,B,C,D)$ have appropriate dimensions. The corresponding transfer function $H(s) = C(sI-A)^{-1}B+D$ characterizes the input-output mapping under zero initial conditions \cite{zhou1998}. We focus solely on square systems (systems with an equal number of inputs and outputs) as is the case for power systems modeled via impedances. For parameter-dependent systems, we resort to LPV systems, where we consider affine LPV transfer function matrix of the form
\begin{align}
    H(\gamma,s)=H_{0}(s)+{ \sum_{\forall k}}\gamma_{k}H_{k}(s).
    \label{eqn:affine_LPV} 
\end{align} 
where the  parameter vector $\gamma$ encodes system parameters. We focus on $\mathcal{RH}_\infty$, the space of rational, proper, and stable transfer functions, which can represent admittance matrices \cite{Fan2020_problemsAdmittance}. Let $F \in \{\mathbb{R}, \mathbb{C}\}$ denote the field of interest and define $\mathcal{L}_2^m(F)$ as the space of square-integrable signals $\hat{u}, \hat{y}: \mathbb{R}_{\ge 0} \rightarrow F^m$ equipped with the inner product $\langle \hat{u}, \hat{y} \rangle := \int_0^\infty \hat{u}(t)^* \hat{y}(t) \, dt$ and induced norm $\|\hat{u}\|_2 := \sqrt{\langle \hat{u}, \hat{u} \rangle}$.  The imaginary unit is denoted by $\textup{j}$.   {The Fourier transform of $\hat{u} \in \mathcal{L}_2^n({F})$ is defined as $u(\textup{j}\omega) := \int_0^{\infty} e^{-\textup{j}\omega t} \hat{u}(t) \, dt$.}
The distance between two sets $A, B \subset \mathbb{C}$ is defined as $\operatorname{d}(A,B)=\inf_{a\in A,b\in B}\norm{a-b}_{\mathbb{C}}$, where $\norm{a-b}_{\mathbb{C}}=\sqrt{(\operatorname{Re}(a-b))^2+(\operatorname{Im}(a-b))^2}$. The Minkowski sum and difference are $A\oplus B=\{a+b\mid a\in A,b\in B\}$ and $A\ominus B=\{a-b\mid a\in A,b\in B\}$, and the convex hull $\operatorname{co}(A)$ is the smallest convex set containing $A$.  {The angle between two nonzero vectors $u,y$ in the same Hilbert space is defined as $\angle(u,y):= \arccos\!\left( \tfrac{\Re\langle u,\, y \rangle} {\|u\|_2\,\|y\|_2} \right)$}.

\subsection{Scaled Relative Graph Theory}

Initially developed for convergence analysis in optimization \cite{ryu2022large}, the SRG provides a geometric framework for characterizing operator properties. It has been extended to the study of linear and nonlinear dynamical systems \cite{Chaffey_2023}, offering a set-based representation of frequency-domain interactions that enables a precise stability assessment. For a transfer function matrix $H(s)$, the frequency-wise SRG is defined as
\begin{align}
\operatorname{SRG}(H(\textup{j}\omega)) = \left\{ \dfrac{\|H(\textup{j}\omega)u\|_2}{\|u\|_2} e^{\pm \mathrm{j}  \angle(u,H(\textup{j}\omega)u)} \right\}\subset\mathbb{C},
\label{eqn:SRG_operator_transferfunction}
\end{align}
for each $\omega \in [0,\infty)$, evaluated over all $u$ such that $\|u\|_2 \neq 0$ and $\|H(\textup{j}\omega) u\|_2 \neq 0$. The complete $\operatorname{SRG}(H(\textup{j}\omega))$ is obtained as the union of all frequency-wise SRGs, forming a three-dimensional set. The SRG framework has several properties that facilitate its application to stability analysis. 

\begin{fact}[Chord Property\cite{Chaffey_2023,krebbekx2025}]\label{property:chord}
An operator $A$ is said to satisfy the chord property if for every bounded $ z \in \operatorname{SRG}(A)$, the line segment $[z, z^*]$, defined as $z_1,z_2 \in \mathbb{C}$ as $[z_1,z_2] := \{ \beta z_1 + (1 - \beta)z_2 \mid \beta \in [0,1] \}$, is in $\operatorname{SRG}(A)$. 
\end{fact}

This property facilitates the analysis of operator combinations, as established by the following result.

\begin{fact}[Sum of operators\cite{Chaffey_2023,krebbekx2025}]\label{property:sum}
Let $A$ and $B$ be bounded operators. If either $\SRG A$ or $\SRG B$ meet the chord property. Then, $\SRG{A+B}\subseteq\SRG A + \SRG B$.
\end{fact}

We introduce approximation schemes that meet the \emph{chord property} and simplify computational needs. We define the \textit{tight chord approximation} that extends the SRG to its convex hull.


\begin{definition}[Tight chord approximation]
\label{def:approx_chord}
Given $H(\mathrm{j}\omega)$, its tight chord approximation $\overline{H}(\mathrm{j}\omega)$ is defined at each $\omega$ as
\begin{align*}
    \operatorname{SRG}(\overline{H}(\mathrm{j}\omega))= \operatorname{co}\!\big(\operatorname{SRG}(H(\mathrm{j}\omega))\big),
\end{align*}
i.e., $\operatorname{SRG}(\overline{H}(s))$ is the frequency-wise convex hull of $\operatorname{SRG}(H(s))$. 
\end{definition}

Moreover, we also introduce the disk approximation, which provides a conservative outer approximation with guaranteed satisfaction of the chord property.

\begin{definition}[Disk approximation]
\label{def:approx_disk}
    Given $H(\textup{j}\omega)$, $\widehat{H}(\textup{j}\omega)$ is defined as an operator such that  $ \operatorname{SRG}(\widehat{H}(\textup{j}\omega))\supseteq\operatorname{SRG}({{H}(\textup{j}\omega)})$ and there are $a\in\mathbb{R}$ and $b>0$, so that
\begin{align*}
     \operatorname{SRG}(\widehat{H}(\textup{j}\omega)):=
     \{r_de^{\textup{j}\alpha}+a |\;\forall \;\alpha \in [0,2\pi], \;b \geq r_d > 0\}. 
\end{align*}
\end{definition}

In essence, $\widehat{H}(\textup{j}\omega)$ over-approximates ${H}(\textup{j}\omega)$ as a closed disk satisfying the chord property, whose parameters $a$ and $b$ determine its tightness. Note that the \emph{tight chord} approximation is unique, whereas the disk approximation is not unique and $\operatorname{SRG}(\widehat{H}(\textup{j}\omega))\supseteq\operatorname{SRG}(\overline{H}(\textup{j}\omega))$.

\begin{example}
Consider the transfer function representing an RL admittance:    \begin{align}\label{eqn:Ygrid}
        Y_{\text{grid}}(s)=\begin{bmatrix}
        1.281+\tfrac{3.14 s}{\omega_0}     & -3.14 \\
        3.14 & 1.28+\tfrac{3.14s}{\omega_0}
        \end{bmatrix}^{-1}, 
    \end{align} 
    where $\omega_0=2\pi50$.  {We compute the SRG using \eqref{eqn:SRG_operator_transferfunction} for each frequency, where we use $H(s)=Y_{\text{grid}}(s)$}. The SRG and its frequency-wise \emph{tight chord} approximation are shown in Fig.~\ref{fig:example_grid}.
\begin{figure}[ht]
    \centering
\begin{subfigure}[b]{0.24\textwidth}
    \centering
    \includegraphics[width=0.8\linewidth]{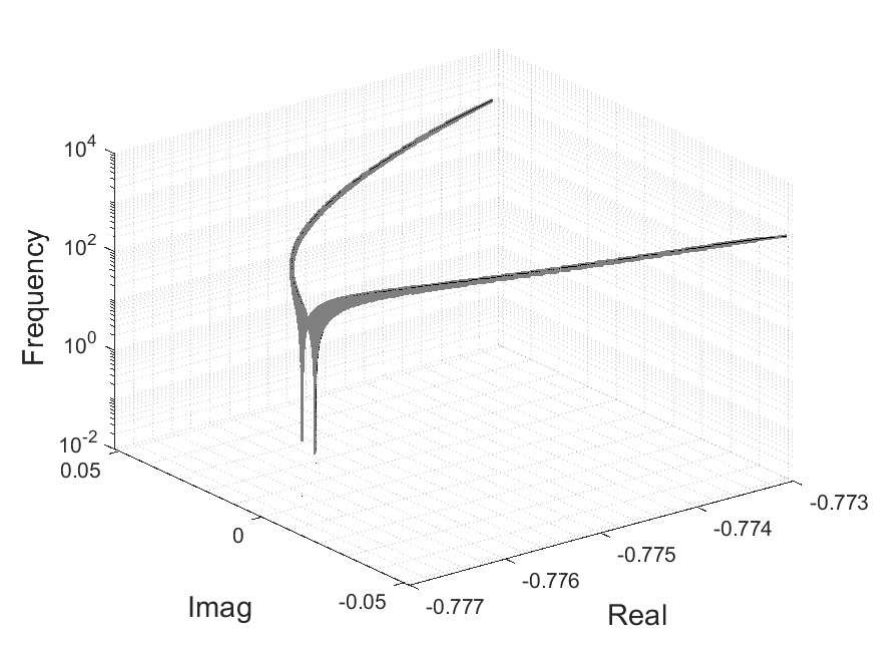}
    \caption{}
\end{subfigure}
\begin{subfigure}[b]{0.24\textwidth}
    \centering
    \includegraphics[width=0.8\linewidth]{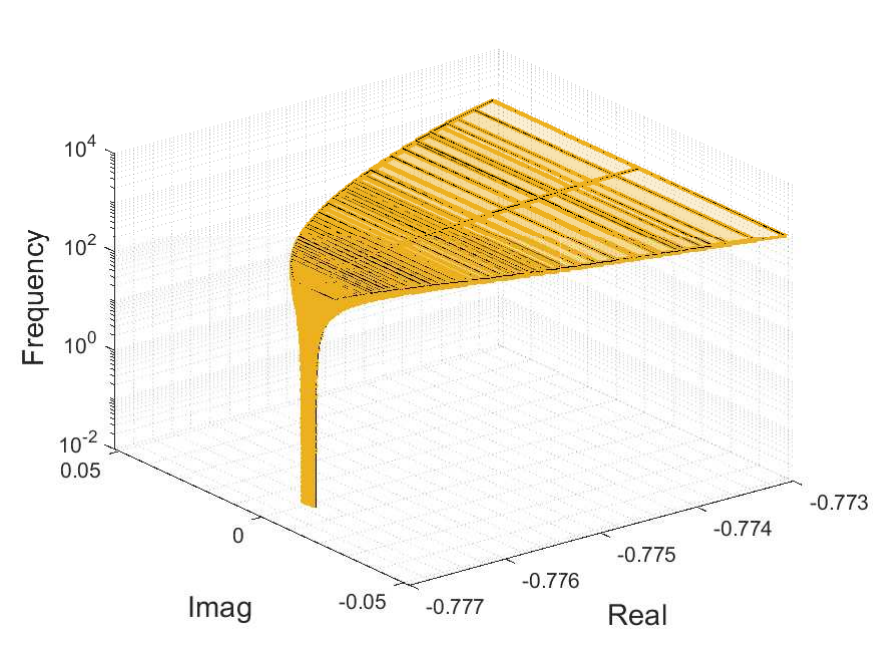}
    \caption{}
\end{subfigure}
    \caption{(a)~$\operatorname{SRG}(Y_{\text{grid}}(\textup{j}\omega))$ for all $\omega\in[10^{-2},10^3]$ (b)~$\operatorname{SRG}(\overline{Y}_{\text{grid}}(\textup{j}\omega))$ for all $\omega\in[10^{-2},10^3]$.}
    \label{fig:example_grid}
\end{figure}
\end{example}

\section{Power System Model} \label{sec:model}
\subsection{Multi-Converter System Representation}

The analytical framework considers $\mathcal{C}=\{1,\ldots,n\}$ converters  interconnected through a grid, as depicted in Fig.~\ref{fig:multi_conv_system}. Different control architectures across converters are used in multi-converter grids, including GFL {, which uses phase-locked loops (PLL) for grid synchronization, and GFM  that actively sets voltage and frequency references\cite{Milano2018}}.

\begin{figure}[hbt]
\vspace{-4mm}
    \centering 
    \begin{tikzpicture}[scale=1, every node/.style={transform shape}]
\draw[ thick,color=color3bg] (6.6,1.5) node[left]{IBR$_{\text{GFL}}$};
\draw[ thick,color=color2bg] (6.6,2.5) node[left]{IBR$_{\text{GFM}}$};
\draw[ thick,color=color2bg] (-0.4,1.5) node[left]{IBR$_{\text{GFM}}$};
\draw[ thick,color=color3bg] (-0.9,2.5) node[left]{IBR$_{\text{GFL}}$};


\draw
(4.5,1) coordinate(1-node) 
(1-node) --++ (0,0.3) node[above]{} 
(1-node)--++(0,0.8) node[left]{$n$} 
coordinate[pos=0.3](1-r) 
coordinate[pos=0.6](1-rr)
coordinate[pos=0.9](1-rrr)
;
\draw (5,1.5)node[sacdcshape,scale=-.5,color=color3bg,fill=white] (DC1){};

\draw
(0.5,1)  coordinate(2-node) 
(2-node) --++ (0,0.3) node[above]{}
(2-node)--++(0,0.8) node[right]{$2$} 
coordinate[pos=0.3](2-r) 
coordinate[pos=0.6](2-rr)
coordinate[pos=0.9](2-rrr)
;
\draw (-0.65,2.5)node[sacdcshape,scale=-.5,color=color3bg,fill=white] (DC1){};

\draw
(0,2)  coordinate(3-node) 
(3-node) --++ (0,0.3) node[above]{}
(3-node)--++(0,0.8) node[left]{$1$} 
coordinate[pos=0.3](3-r) 
coordinate[pos=0.6](3-rr)
coordinate[pos=0.9](3-rrr)
;
\draw (0,1.5)node[sacdcshape,scale=-.5,color=color2bg,fill=white] (DC2){};
\draw
(4.5,2)  coordinate(5-node) 
(5-node)--++(0,0.3) node[above]{} 
(5-node)--++(0,0.8) node[left]{$n$-$1$} 
coordinate[pos=0.3](5-r) 
coordinate[pos=0.6](5-rr)
coordinate[pos=0.9](5-rrr);

\draw (5,2.5)node[sacdcshape,scale=-.5,color=color2bg,fill=white] (DC2){};


\draw
(1-rr)--++(0.23,0)
(2-rr)--++(-0.23,0) 
(3-rr)--++(-0.43,0)
(5-rr)--++(0.23,0)
(1-rr)--++(-0.8,0)
(2-rr)--++(0.8,0) 
(3-rr)--++(1.5,0)
(5-rr)--++(-0.8,0)
;

\node[cloud, draw,
    fill = gray!10,
    minimum width = 3cm,
    minimum height = 1cm,
    aspect=1.3,
    cloud puffs = 16] (c) at (2.5,2) {Power grid};

\draw[-latex,color3bg] (0.2,2.55) -- (0.65,2.55);
\node at (0.4,2.75) {$ i_{dq,1}$};
\draw[-latex,color3bg] (0.9,2.47) -- (0.9,2.85);
\node at (1,2.9) {$v_{dq,1}$};
\end{tikzpicture}
    \caption{Configuration of the multi-converter grid.}
    \label{fig:multi_conv_system}
    \vspace{-4mm}
\end{figure}
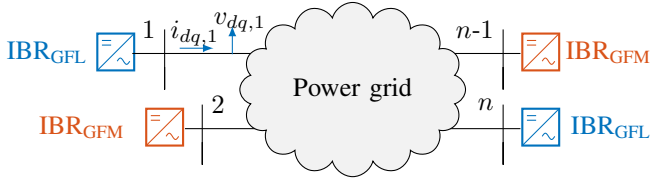
Independent of the control implementation, the linearized dynamic behavior of each converter unit can be systematically characterized through its frequency-domain admittance representation~\cite{huang2024gain,Fan2020_problemsAdmittance}. The admittance representation is a relationship between the vectors $[ i_{d,i} \; i_{q,i} ]^\top$ and $[  v_{d,i} \;  v_{q,i} ]^\top$ that describe the behavior of the current and voltage of the $i$-th converter unit, respectively, expressed within a global $dq$-coordinate. The converter admittance can be formulated as an LPV system by explicitly expressing its dependence on the parameters. These parameters can be chosen to represent operating points, control gains, or filter parameters. In this work we focus on the impact of the operating points on stability, although the framework is amenable to any other parameter.  Since the converter dynamics are nonlinear (see Appendix~\ref{Appx:GFL_admittance}), we linearize around the operating point. For the $i$-th converter, the admittance transfer function is:
\begin{align}\label{eqn:admittance_model}
   \begin{bmatrix} i_{d,i} \\ i_{q,i} \end{bmatrix} =  {-J(\theta_i)} {Y}_{i}(\gamma_i,s) {J(-\theta_i)}\begin{bmatrix}  v_{d,i} \\  v_{q,i} \end{bmatrix} \,,
\end{align}
where ${Y}_{i}(\gamma_i,s)$ is the $2 \times 2$ transfer matrix of the $i$-th converter, with $\gamma_i=[\gamma_{i,1},\dots,\gamma_{i,k},\dots,\gamma_{i,n_p}] \in \phi_i$ denoting its operating parameters over the admissible set $\phi_i$. The parameters $\gamma_i$ are understood to be fixed or they change slowly relative to the dynamics of interest. 
The parameter $\theta_i$ is the steady-state angular displacement between the global $dq$-frame and the converter’s local coordinates, with the  transformation matrix defined as
$${J(\theta_i)} = \begin{bmatrix} \cos \theta_i & -\sin \theta_i \\ \sin \theta_i & \cos \theta_i \end{bmatrix}.$$ 

The derivation methodologies for ${ Y}_{i}(\gamma_i,s)$ under various control strategies, including both GFL and GFM, are extensively documented~\cite{Huang2024_Howmany,Wang2024Limitations} and summarized in Appendix~\ref{Appx:GFL_admittance}. These admittance matrices are formulated in local coordinates and then referenced to global $dq$-coordinates. The individual converter representation extends to the complete $n$-converter system through the aggregate formulation:
\begin{align}\label{eqn:admittance_devices}
    \underbrace{ \begin{bmatrix} i_{d,1} \\ i_{q,1}  \\ \vdots \\ i_{d,n} \\ i_{q,n} \end{bmatrix}}_{=: {\bf i}} = 
  \; \underbrace{-J (\Theta) \; {\bf Y}({\Gamma},s) \; J(-\Theta)}_{=:\Tilde{\bf Y}(\Gamma,s)} \;
\underbrace{\begin{bmatrix}  v_{d,1} \\  v_{q,1} \\  \vdots \\  v_{d,n} \\  v_{q,n} \end{bmatrix}}_{=:{\bf v}} \,,
\end{align}
\noindent
 The admittance matrix ${\bf Y}(\Gamma,s)$ represents a $2n \times 2n$ block-diagonal structure constructed from individual converter admittances: ${\bf Y}(\Gamma,s) = {\rm diag}\{{Y_{1}(\gamma_1,s),\dots,Y_{n}(\gamma_n,s)}\}.$ 
Moreover, $\Gamma = [\gamma_1, \ldots, \gamma_n]$, and $J({\Theta}) = {\rm diag}\{{J(\theta_1)},\dots,{J(\theta_n)}\} $ provides the coordinate transformation matrix incorporating all angular displacements.
\subsection{Grid Dynamics }\label{sec:grid}
The electrical behavior of each line between terminals $i$ and $j$ is modeled in the frequency domain using per-unit normalization, with the current–voltage relation expressed in matrix form within the synchronously rotating frame at nominal frequency $\omega_0$:
\begin{align*}
    \begin{bmatrix}
          i_{d,ij}(s)\\  i_{q,ij}(s)
    \end{bmatrix} = y_{ij}(s)
    \left(
    \begin{bmatrix}
          v_{d,i}(s)\\   v_{q,i}(s)
    \end{bmatrix} - \begin{bmatrix}
          v_{d,j}(s)\\   v_{q,j}(s)
    \end{bmatrix}
    \right)
\end{align*}
Here, the vector $  i_{dq,ij} = [  i_{d,ij} \;\;   i_{q,ij}]^\top$ quantifies current flowing between terminals $i$ and $j$. The voltage at each terminal is similarly captured by $  v_{dq,i} = [  v_{d,i} \;\;   v_{q,i}]^\top$, while the line characteristics are encoded within the $2 \times 2$ admittance transfer matrix $y_{ij}(s)$. 
%
The grid admittance matrix satisfy $y_{ij}(s) = -y_{ij}(s)$ for distinct terminals $i \neq j$, while self-admittance terms aggregate all connected paths $y_{ii}(s) = \sum_{j \neq i}^n y_{ij}(s)$. 
After obtaining the grid admittance matrix, we perform a Kron reduction \cite{Dorfler2011}, where the maintained nodes are the $n$ nodes, where the converters are connected. This nodal analysis provides a $2n \times 2n$ grid admittance matrix as follows
\begin{align}\label{eqn:Ygrid_closedloop}
 {\bf i} = Y_{\text{grid}}(s){\bf v}.
\end{align}

\subsection{Feedback Loop between Grid and Multi-Converter System}
We can now establish the feedback loop between the converters and the grid. Equations \eqref{eqn:admittance_devices} and \eqref{eqn:Ygrid_closedloop} together capture the power system's closed-loop dynamics, depicted in Fig.~\ref{fig:decentralizedfb}.

\begin{figure}[ht]
\vspace{-4mm}
\centering
\begin{tikzpicture}[scale=1, every node/.style={transform shape}]
\filldraw[color=red!60, fill=red!5, very thick,rounded corners=5] (6,4.6) rectangle (13.25,3.5);
\filldraw[color=black, fill=white] (9.5,4.5) rectangle (11.5,3.7);
\node at (10.5,4.1) {$-{\bf Y}(\Gamma,s)$};
\filldraw[color=black, fill=white] (12,4.5) rectangle (13,3.7);
\node at (12.5,4.1) {$J(\Theta)$};
\filldraw[color=black, fill=white] (8,4.5) rectangle (9,3.7);
\node at (8.5,4.1) {$J(\text{-}\Theta)$};
\filldraw[color=black, fill=white] (9.75,3.4) rectangle (11.25,2.6);
\node at (10.5,3) {$Y^{-1}_{\text{grid}}(s)$}; 
\draw[-latex, line width = .5 pt] (9,4.1) -- (9.5,4.1);
\draw[-latex, line width = .5 pt] (11.5,4.1) -- (12,4.1);
\draw[-latex, line width = .5 pt] (12.5,3.7) -- (12.5,3) -- (11.25, 3 );
\draw[-latex, line width = .5 pt] (9.75,3) -- (8.5,3) -- (8.5,3.7) ;
\node at (9.35,3.1664) {$\bf v$};
\node at (7,3.8) {$\Tilde{\bf Y}(\Gamma,s)$};
\node at (11.6,3.1664) {$\bf i$};
\end{tikzpicture}
\caption{Closed-loop dynamics of a converter-grid system}
\label{fig:decentralizedfb}
\end{figure}
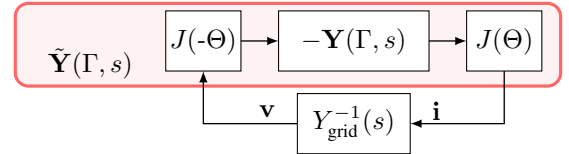 

For a given power demand, multiple combinations of converter setpoints can supply the required power. However, not all points ensure stable closed-loop operation: certain setpoints can trigger resonances or control instabilities, as documented in recent low-inertia grid events~\cite{a2024_diagnosis,dong2023analysis}. Determining the stability boundary for $n$ converters is fundamentally challenging: the poor scalability of centralized analysis makes it computationally prohibitive, while existing eigenvalue- and scenario-based tools offer only limited insight. To overcome this, we employ SRG methods to extract, for each converter, a stable subset $\mathcal{S}_i \subseteq \phi_i$, thereby enabling the systematic certification of stability regions for secure operation. 
We apply SRG-based decentralized stability conditions on the closed-loop system in Fig.~\ref{fig:decentralizedfb}, as formalized in the following theorem.

\begin{theorem}\label{thm:dec_SRG_S} \cite{Baron2025decentralized}
Let ${Y}_{\text{grid}}(s)\in \mathcal{RH}_\infty^{m \times m}$ and  ${\bf \Tilde{Y}}(\Gamma,s)\in \mathcal{RH}_\infty^{m \times m} $ be connected in closed loop as in Fig. \ref{fig:decentralizedfb}, { and both have no poles on $\textup{j}\mathbb{R}\cup \{\infty\}$}.
 If $\forall\; s=\textup{j}\omega \text{, with } \omega\;\in[0,\infty) $, $\forall \; \tau \in (0,1],$ and  $ \forall i\in\mathcal{C}$
\begin{align}
          \operatorname{SRG}(Y_{i}(\gamma_i,s)) \cap -\tau\operatorname{SRG}(\widehat{Y}_{\text{grid}}(s)) = \emptyset,  
          \label{eqn:sufficient_decentralized}
\end{align} 
then  the closed-loop system is $\mathcal{L}_{2}$-stable.
\end{theorem}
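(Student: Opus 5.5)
The plan is a homotopy argument, in the spirit of the generalized Nyquist criterion, in which the separation condition \eqref{eqn:sufficient_decentralized} prevents $I+Y_{\text{grid}}^{-1}(\textup{j}\omega)\tilde{\bf Y}(\Gamma,\textup{j}\omega)$, or equivalently the return difference of the loop in Fig.~\ref{fig:decentralizedfb}, from becoming singular anywhere along the homotopy.

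First I would rewrite the loop. By \eqref{eqn:admittance_devices}, $\mathbf{i}=-J(\Theta)\mathbf{Y}J(-\Theta)\mathbf{v}$ and $\mathbf{i}=Y_{\text{grid}}\mathbf{v}$. Closed-loop stability then amounts to $Y_{\text{grid}}(s)+J(\Theta)\mathbf{Y}(\Gamma,s)J(-\Theta)$ having a stable inverse, or equivalently to the loop $(Y_{\text{grid}}^{-1},\tilde{\mathbf Y})$ being well posed and stable. Since $J(\Theta)$ is unitary, $\operatorname{SRG}(J(\Theta)\mathbf{Y}J(-\Theta))=\operatorname{SRG}(\mathbf{Y})$. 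Because $\mathbf{Y}$ is block diagonal, its SRG lies in $\operatorname{co}\big(\bigcup_i \operatorname{SRG}(Y_i)\big)$. To justify this, write $u=(u_1,\dots,u_n)$; then $\langle u,\mathbf{Y}u\rangle=\sum_i\langle u_i,Y_iu_i\rangle$ and $\|\mathbf{Y}u\|^2=\sum_i\|Y_iu_i\|^2$. A direct estimate of this type shows that the block-diagonal SRG point lies in the hull of the block points, at least against a disk-shaped obstacle, which is all we need.

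Second, I would use the disk approximation $\widehat{Y}_{\text{grid}}$ from Definition~\ref{def:approx_disk}. Its SRG is a disk and so satisfies the chord property. The condition for each $i$ says that $\operatorname{SRG}(Y_i)$ avoids the disk $-\tau D$ for every $\tau\in(0,1]$. The key geometric step is to show that a point $z$ of $\operatorname{SRG}(\mathbf{Y})$, which is a weighted combination of block points, also avoids $-\tau D$. This step is delicate because a convex combination of points outside a disk can land inside it. I would handle it by noting that the family of scaled disks $\{-\tau D\}_{\tau\in(0,1]}$ has union a cone-like region $K$. If $D$ does not contain the origin, $K$ is convex, being the convex hull of the disk and the origin without the origin itself. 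The hypothesis then says that each $\operatorname{SRG}(Y_i)$ avoids $K$. Avoiding a convex set is still not preserved under convex hulls, so instead I would work directly with the quadratic characterization. Avoiding the disk centered at $a$ with radius $b$ means $\|Y_iu_i+\tau a u_i\|^2> \tau^2 b^2\|u_i\|^2$ for each block, possibly up to the usual SRG/angle bookkeeping. This inequality is additive over the blocks, so it transfers to $\mathbf{Y}$. I expect this to be the main obstacle: making the disk-exclusion a sum-preserving quadratic inequality. The $\pm$ angle convention and the $\tau$-scaling have to be treated carefully here.

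Third, I would apply the SRG feedback theorem of Chaffey et al. (the homotopy version). Along $\tau\in(0,1]$ the loop $(\tau Y_{\text{grid}}^{-1},\tilde{\mathbf{Y}})$ (equivalently, the scaled grid term $\tau\widehat{Y}_{\text{grid}}$) stays separated: $\operatorname{SRG}(\tilde{\mathbf Y})\cap -\tau\operatorname{SRG}(\widehat{Y}_{\text{grid}})=\emptyset$ with $\operatorname{SRG}(Y_{\text{grid}})\subseteq\operatorname{SRG}(\widehat Y_{\text{grid}})$ by Definition~\ref{def:approx_disk}. At each frequency this yields a uniform lower bound on $\|(Y_{\text{grid}}+\tau^{-1}\dots)u\|$, so no eigenvalue of the return difference crosses zero. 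Compactness of $[0,\infty]$, which holds since there are no poles on $\textup{j}\mathbb{R}\cup\{\infty\}$ and both operators are in $\mathcal{RH}_\infty$, gives a positive separation distance. At $\tau\to0$ the loop is trivially stable because both subsystems are stable. Continuity of the Nyquist encirclement count in $\tau$ then gives zero encirclements at $\tau=1$, hence $\mathcal{L}_2$-stability with a gain bound inversely proportional to the separation.
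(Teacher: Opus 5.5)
The paper gives no proof of this theorem; it only cites it. So your argument has to stand on its own.

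\textbf{Steps 1--2 are sound, with one correction.} Drop the claim $\operatorname{SRG}(\mathbf{Y})\subseteq\operatorname{co}\big(\bigcup_i\operatorname{SRG}(Y_i)\big)$, because it is false. For example, $\operatorname{diag}(-1,1)$ has the whole unit circle as its SRG, whereas $\operatorname{co}\{-1,1\}=[-1,1]$. The quadratic test you switch to is exact, not approximate. For real $c$ and the SRG point $z$ generated by $u$, one has $\|Y_iu-cu\|^2=\|u\|^2|z-c|^2$. So avoiding the real-centred disk $-\tau D$ is precisely $\|Y_iu_i+\tau a u_i\|>\tau b\|u_i\|$, and no angle bookkeeping is needed. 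This inequality sums over blocks and is invariant under $J(\Theta)$. Combined with $\|Y_{\text{grid}}u-au\|\le b\|u\|$ and the triangle inequality, it shows that $J(\Theta)\mathbf{Y}J(-\Theta)+\tau Y_{\text{grid}}$ is nonsingular at every finite $\omega$ and every $\tau\in(0,1]$.

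\textbf{The genuine gap is the homotopy in Step 3.} Putting $\tau$ on $Y_{\text{grid}}^{-1}$ in the loop is the same as scaling the grid admittance by $1/\tau\in[1,\infty)$, not by $\tau$. That family is therefore not the one controlled by \eqref{eqn:sufficient_decentralized}. It would also need $Y_{\text{grid}}^{-1}$ to be stable, which is not assumed and fails for the RL grid \eqref{eqn:Ygrid}, whose impedance is improper.

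The family the hypothesis does control is $F_\tau:=J(\Theta)\mathbf{Y}J(-\Theta)+\tau Y_{\text{grid}}$. Its endpoint $\tau\to0$ is not a zero-gain loop; it is the converters facing an open circuit. Stability there means $\mathbf{Y}^{-1}\in\mathcal{RH}_\infty$, i.e.\ $\det Y_i$ has no closed right-half-plane zeros. Nothing in the statement provides this, and it cannot be derived from it.

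A counterexample shows this. Take one converter with $\Theta=0$, $\mathbf{Y}=yI_2$, $y(s)=\tfrac{s-1}{s+1}$, and $Y_{\text{grid}}=\varepsilon I_2$ with $0<\varepsilon<1$. Use the disk approximation with $a=0$, $b=\varepsilon$.
\begin{itemize}
\item $\operatorname{SRG}(\mathbf{Y}(\textup{j}\omega))$ lies on the unit circle.
\item $-\tau\operatorname{SRG}(\widehat{Y}_{\text{grid}})\subset\{|z|\le\varepsilon\}$ for all $\tau\in(0,1]$, so \eqref{eqn:sufficient_decentralized} holds.
\item Yet $Y_{\text{grid}}+\mathbf{Y}=\tfrac{(1+\varepsilon)s-(1-\varepsilon)}{s+1}I_2$ vanishes at $s=\tfrac{1-\varepsilon}{1+\varepsilon}>0$, so the map from an injected current to $\mathbf{v}$ is unstable.
\end{itemize}

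Hence your Step 3 needs an extra hypothesis, for instance $Y_i^{-1}\in\mathcal{RH}_\infty$ for all $i$, together with the separation condition also at $\omega=\infty$. With these, $\det F_\tau(\textup{j}\omega)$ is bounded away from zero for $(\tau,\omega)\in[0,1]\times[0,\infty]$. Its winding number is then zero at $\tau=0$, and your argument closes. Relatedly, compactness of $[0,\infty]$ does not by itself give a positive margin. The hypothesis is only pointwise, is not imposed at $\omega=\infty$, and gives nothing uniform as $\tau\to0$, which is exactly where the endpoint problem lies.
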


Intuitively, the condition states that as long as each converter SRG does not intersect with the disk approximation of $-\tau \SRG{Y_{\text{grid}}(s)}$ at any frequency, the system remains $\mathcal{L}_2$-stable. A related viewpoint was independently introduced in \cite[Proposition 3]{zhang2025phantom}, where the aggregate admittance ${\bf Y}(\Gamma,s)$ is treated as a single system. 
In contrast, our approach decouples the analysis by working directly with individual converter SRGs, enabling decentralized stability characterization. 

\section{LPV SRG and Decentralized Set Estimation }
In this section, we focus on the single-converter case, noting that the results naturally extend to any number of converters connected to the grid. We therefore drop the subindex $i$ and denote the admittance of the $i$-th converter as $Y(\gamma,s)$.

The dynamics of power converters depend on operating points, which motivates extending the SRG framework to LPV systems. Now, consider a converter admittance ${Y}(\gamma,s)$, the \emph{parameter-dependent} SRG is defined as
\begin{align*}
\operatorname{SRG}&(Y(\gamma,s))=\\
&\bigcup_{\gamma\in\phi}
\left\{
  \dfrac{\|Y(\gamma,s) v_{dq}\|_2}{\|v_{dq}\|_2}
 e^{\pm \mathrm{j} \angle(v_{dq},\,Y(\gamma,s) v_{dq})}\right\}\subset\mathbb{C},
\end{align*}
with $s = \mathrm{j}\omega$ where $\omega \in [0,\infty)$. The parameter vector $\gamma$ depends on $i_{dq0} = [i_{d0}, i_{q0}]$ and, when required by the admittance, on higher-order cross terms derived from it (e.g.$i_{d0}^2 i_{q0}^3$). For GFL, the admittance is typically affine in $i_{dq0}$, so $\gamma = [i_{d0}, i_{q0}]$~\cite{Huang2024_Howmany}. In contrast, GFM generally cannot be written in affine LPV form (see Appendix~\ref{Appx:GFL_admittance} and \cite{Huang2024_Howmany}). A practical approach is to approximate the non-affine dependence using series, such as shifted geometric expansions \cite{toth2010modeling}.  {An exception occurs with certain voltage-synchronizing GFM schemes, which are directly LPV-affine without requiring this additional step \cite{andres2023_ipll}}.

\subsection{Decentralized Stability Characterization using SRG}
Given that $Y(\gamma,s)$ is LPV affine in $\gamma$, its SRG can be expressed in a tractable form, as shown in Lemma~\ref{lemma:SRG_LPV_affine}. 

\begin{lemma}\label{lemma:SRG_LPV_affine}
Consider $Y(\gamma,s)$ represented in the affine LPV form of~\eqref{eqn:affine_LPV}, where each $\operatorname{SRG}(Y_{k}(s))$, for $k\neq0$, is approximated by $\operatorname{SRG}(\overline{Y}_{k}(s))$. Then,
\begin{align}\label{eqn:affine_LPV_SRG}
\operatorname{SRG}(Y(\gamma,s))
\subseteq
\operatorname{SRG}(Y_{0}(s))
\bigoplus_{\forall k} \gamma_{k} \operatorname{SRG}(\overline{Y}_{k}(s))
\end{align}
\end{lemma}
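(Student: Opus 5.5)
We argue frequency-wise and for a fixed parameter value $\gamma\in\phi$; the union over $\gamma$ in the parameter-dependent SRG then only requires that the right-hand side of \eqref{eqn:affine_LPV_SRG} be read pointwise in $\gamma$. Fix $\omega$ and write $Y(\gamma,\textup{j}\omega)=Y_0(\textup{j}\omega)+\sum_k \gamma_k Y_k(\textup{j}\omega)$, a sum of bounded (finite-dimensional) operators. The plan is to apply Property~\ref{property:sum} repeatedly. Each summand $\gamma_k Y_k$ is replaced by its tight chord approximation, whose SRG is convex and hence satisfies Property~\ref{property:chord}.

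\textbf{Step 1 (scaling).} We first check that $\operatorname{SRG}(\gamma_k Y_k)=\gamma_k\operatorname{SRG}(Y_k)$ for real $\gamma_k$. For $\gamma_k>0$ this is immediate from \eqref{eqn:SRG_operator_transferfunction}, since the gain scales by $\gamma_k$ and the angle is unchanged. For $\gamma_k<0$ the gain scales by $|\gamma_k|$ and the angle becomes $\pi-\angle(u,Y_ku)$. Because the SRG is symmetric under conjugation (the $\pm$ in the definition), $|\gamma_k|\,r e^{\pm\textup{j}(\pi-\theta)}=\gamma_k r e^{\mp\textup{j}\theta}$, which again lies in $\gamma_k\operatorname{SRG}(Y_k)$. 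The case $\gamma_k=0$ is trivial: the term drops out, because $\gamma_k Y_k$ is the zero operator, which adds nothing, and we interpret $0\cdot\operatorname{SRG}$ as $\{0\}$. The same scaling argument gives $\operatorname{co}(\gamma_k S)=\gamma_k\operatorname{co}(S)$, so $\gamma_k\operatorname{SRG}(\overline{Y}_k)=\operatorname{SRG}(\overline{\gamma_kY_k})$.

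\textbf{Step 2 (enlarging the summands).} Property~\ref{property:sum} requires SRGs of actual operators, so we need an inclusion that lets us work with the convex hulls. We use the fact that $\operatorname{SRG}(Y_k)\subseteq\operatorname{co}(\operatorname{SRG}(Y_k))=\operatorname{SRG}(\overline{Y}_k)$ by Definition~\ref{def:approx_chord}. This is the only place where the tight chord approximation enters; it serves purely to guarantee the chord property.

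\textbf{Step 3 (induction on the number of terms).} Let $A_0=Y_0$ and $A_m=A_{m-1}+\gamma_m Y_m$. Applying Property~\ref{property:sum} to $A_{m-1}$ and $\gamma_mY_m$ requires that one of the two SRGs satisfy the chord property. The natural route, and the main obstacle, is that $\operatorname{SRG}(\gamma_mY_m)$ itself need not be chordal; only its hull is. I would handle this in one of two ways:
\begin{itemize}
\item Invoke the cited form of the sum rule from \cite{Chaffey_2023,krebbekx2025}, which in fact gives $\operatorname{SRG}(A+B)\subseteq\operatorname{SRG}(A)+\mathcal{B}$ for any chordal set $\mathcal{B}\supseteq\operatorname{SRG}(B)$. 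This is how the proof there proceeds: one picks $u$, decomposes $(A+B)u$, and uses only that the point of $B$ lies in a chordal set.
\item Use that for matrices (Hilbert-space operators) the SRG of a normal operator is already convex-hull compatible, and otherwise appeal to the finite-dimensional realizability of convex-hull SRGs.
\end{itemize}
I would go with the first option and state the generalized sum rule explicitly. It yields $\operatorname{SRG}(A_m)\subseteq\operatorname{SRG}(A_{m-1})\oplus\gamma_m\operatorname{SRG}(\overline{Y}_m)$. Iterating over $m$ gives \eqref{eqn:affine_LPV_SRG} at each $\omega$. Finally, taking the union over frequencies and over $\gamma\in\phi$ on both sides, with $\gamma$ appearing in the right-hand side, completes the argument.
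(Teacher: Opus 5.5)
Your proposal is correct and follows the same route as the paper, whose proof is the one-liner ``affine structure + Definition~\ref{def:approx_chord} + Property~\ref{property:sum}''. Your Steps 1--3 spell out what that line leaves implicit, in particular that the sum rule must be used in the strengthened form $\operatorname{SRG}(A+B)\subseteq\operatorname{SRG}(A)\oplus\mathcal{B}$ for a chordal $\mathcal{B}\supseteq\operatorname{SRG}(B)$, since $\gamma_kY_k$ itself need not satisfy the chord property; keep that first option and drop the second bullet, which is not valid as stated (SRGs of normal matrices need not be chordal, e.g.\ $\operatorname{diag}(1,2)$ has a circle as its SRG, and realizing the hull by some other operator $\overline{Y}_k$ says nothing about $Y_0+\gamma_kY_k$).
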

%
\begin{proof}
The result follows directly from the affine structure of $\operatorname{SRG}(Y(\gamma,s))$, Definition~\ref{def:approx_chord}, and Property~\ref{property:sum}.
\end{proof}
%
\begin{example}[Application of Lemma \ref{lemma:SRG_LPV_affine}]\label{example:LPV}
Consider the GFL admittance shown in the Appendix \ref{Appx:GFL_admittance} with parameters of GFL in Table \ref{tab:parameters}. It can be formulated as 
\begin{align*}
    Y_{}(s)={Y_{0}(s)} + i_{d0}{Y_{1}(s)} +  i_{q0}{Y_{2}(s)}.
\end{align*}
The detailed expressions of $Y_{k}(s)$, for $k=0,1,2$, are provided in Appendix~\ref{Appx:GFL_admittance}. Thanks to the affine decomposition, we can compute the SRG of each $Y_k(s)$ independently. First, we plot $\SRG{Y_{0}(s)}$ over the frequency range $\omega \in [10^{-2},10^{3}]$ in Fig.~\ref{fig:SRG_example_Y0}. Moreover, Fig.~\ref{fig:SRG_example_Y1} and Fig.~\ref{fig:SRG_example_Y2} display $\SRG{\overline{Y}_{1}(s)}$ and $\SRG{\overline{Y}_{2}(s)}$ for just four representative frequencies, for clarity purposes.  Note that $\SRG{Y_{1}(s)}$ and $\SRG{Y_{2}(s)}$ correspond only to the orange contour, and the \emph{tight chord} approximation includes all points in gray, satisfying Property~\ref{property:chord}.

\begin{figure}[ht]
    \centering
\begin{subfigure}[b]{0.158\textwidth}
    \centering
    \includegraphics[width=0.95\linewidth]{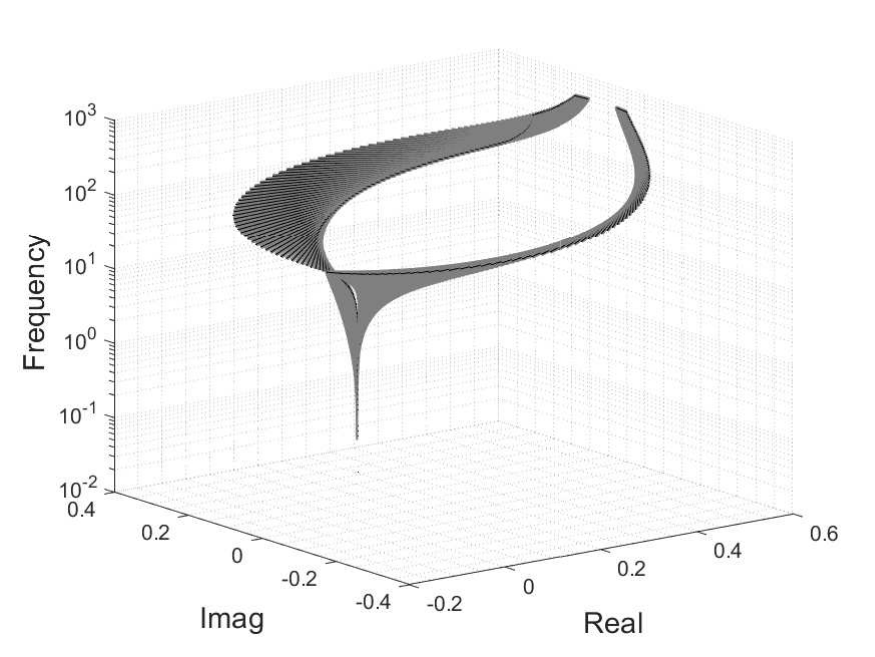}
    \caption{}
    \label{fig:SRG_example_Y0}
\end{subfigure}
\begin{subfigure}[b]{0.158\textwidth}
    \centering
    \includegraphics[width=0.95\linewidth]{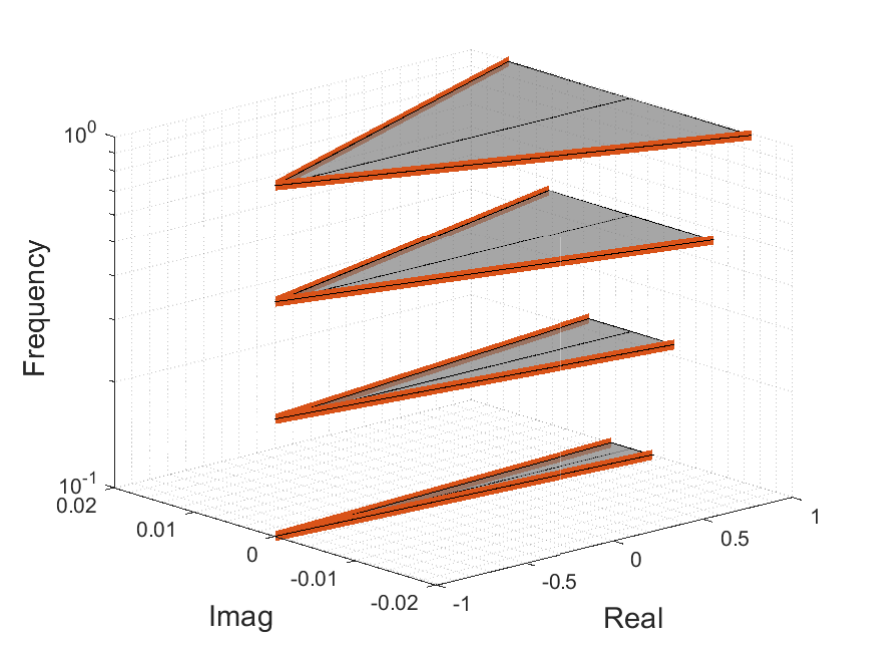}
    \caption{}
    \label{fig:SRG_example_Y1}
\end{subfigure}
\begin{subfigure}[b]{0.158\textwidth}
    \centering
    \includegraphics[width=0.95\linewidth]{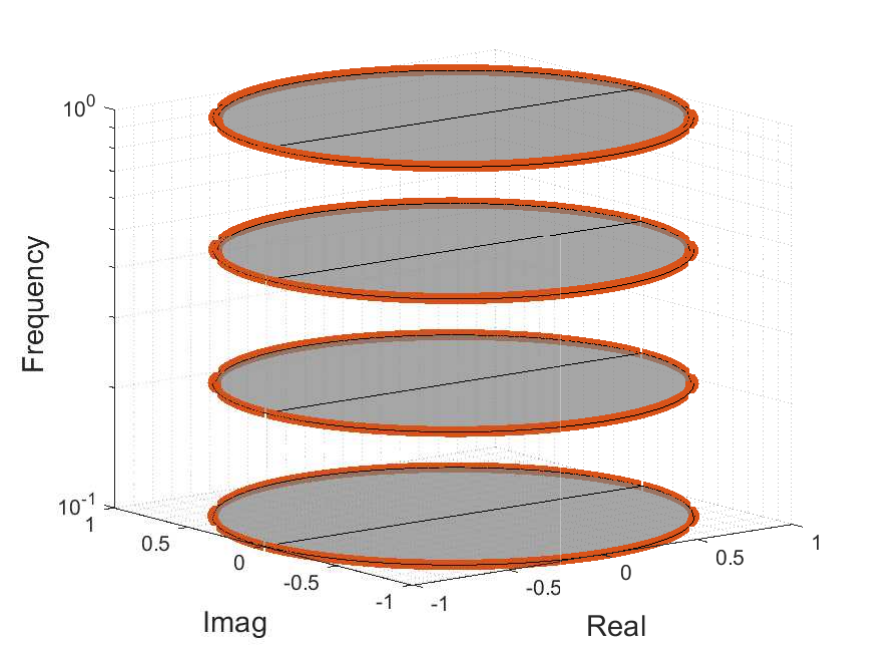}
    \caption{}
    \label{fig:SRG_example_Y2}    
\end{subfigure}
\caption{(a) $\SRG{Y_{0}(s)}$ for $\omega\in[10^{-2},10^3]$ Hz (b) $\SRG{{Y_{1}}(s)}$ in orange  and $\SRG{\overline{Y_{1}}(s)}$ in gray for $\omega\in\{0.1,0.21,0.46,1\}$ Hz (c) $\SRG{{Y_{2}}(s)}$ in orange  and $\SRG{\overline{Y_{2}}(s)}$  in gray for $\omega\in\{0.1,0.21,0.46,1\}$ Hz. }
    \label{fig:SRG_example_1}
    \vspace{-0.5cm}
\end{figure}
\end{example}

For our main theorem, we require two metrics from the frequency-wise SRG projection on the real axis: the center  $ c = \tfrac{(\overline{a}+\underline{a})}{2}$, and the half-width $ \; r = \tfrac{(\overline{a}-\underline{a})}{2},$ where $\overline{a}$ and $\underline{a}$ denote the upper and lower real-axis bounds of the projection.

\begin{example}[Centers and Half-widths]
    From the SRGs in Fig.~\ref{fig:SRG_example_1}, the centers and half-widths of $\operatorname{SRG}(\overline{Y}_{1}(s))$ and $\operatorname{SRG}(\overline{Y}_{2}(s))$ are obtained, as shown in Figs.~\ref{fig:radius_gfl} and~\ref{fig:centers_gfl}. 

\begin{figure}[ht]
    \vspace{-0.3cm}
    \centering
\begin{subfigure}[b]{0.24\textwidth}
    \centering
    \includegraphics[width=1\linewidth]{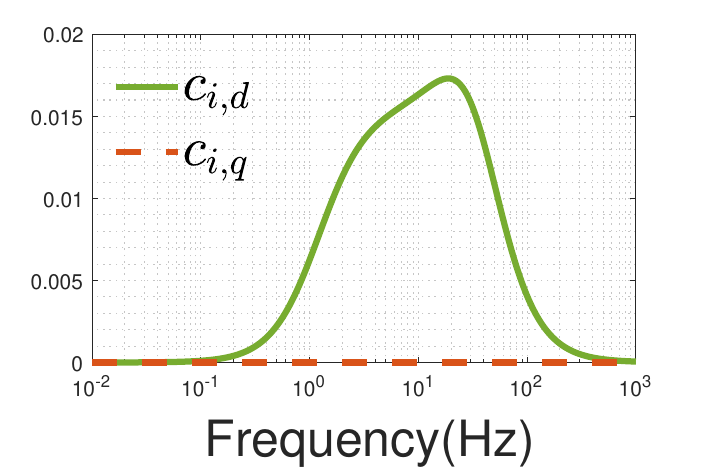}
    \caption{}
    \label{fig:radius_gfl}
\end{subfigure}
\begin{subfigure}[b]{0.24\textwidth}
    \centering
    \includegraphics[width=1\linewidth]{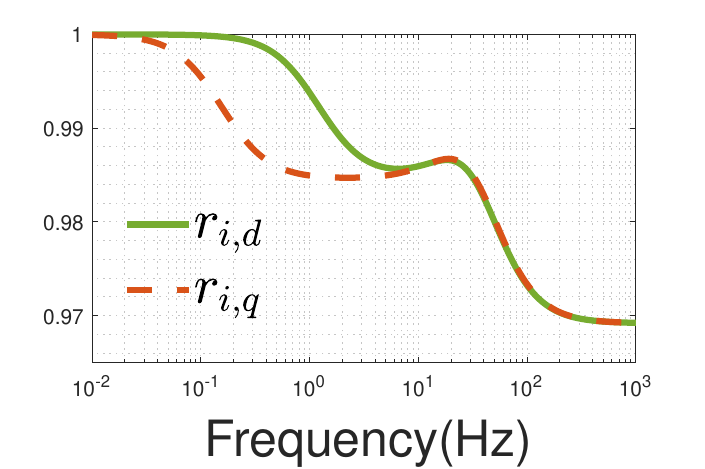}
    \caption{}
    \label{fig:centers_gfl}
\end{subfigure}
    \caption{(a) Centers of $\operatorname{SRG}(\overline{Y}_{1}(s))$ and $\operatorname{SRG}(\overline{Y}_{2}(s))$ (b) Half-widths of $\operatorname{SRG}(\overline{Y}_{1}(s))$ and $\operatorname{SRG}(\overline{Y}_{2}(s))$, for all $\omega\in[10^{-2},10^3]$.}
    \label{fig:centers_radius_example}
    \vspace{-0.1cm}
\end{figure}

\end{example}

 {Furthermore,} to measure the separation between the converter and grid SRGs, we recall the \emph{frequency-wise SRG stability margin}~\cite{Baron2025decentralized}:
\begin{align}\label{eqn:stability_margin}
 \rho(\gamma,\mathrm{j}\omega):=\inf_{\tau\in(0,1]}\operatorname{d}\!\left(\operatorname{SRG}(Y(\gamma,\mathrm{j}\omega)),-\tau\,\operatorname{SRG}(\widehat{Y}_{\mathrm{grid}}(\mathrm{j}\omega))\right). 
\end{align}

A positive margin at every $\omega$,  {i.e.,} $\rho(\gamma,\mathrm{j}\omega) > 0$, guarantees $\mathcal{L}_2$-stability by Theorem~\ref{thm:dec_SRG_S}.  {Note that this margin is dependent on the operation point and therefore on $\gamma$}. {Exploiting properties of the Minkowski sum  {(See Appendix \ref{proof:Thm})}, we derive a theorem that formalizes the connection between admissible operating regions and closed-loop stability, and provides a decentralized stability condition that depends explicitly on the operating point of each converter. We now revisit $n$-converter original problem and state our main theoretical contribution.}

\begin{theorem}\label{thm:main}
Let $Y_{\text{grid}}(s)\in \mathcal{RH}_\infty^{2n \times 2n}$ and  ${\bf \Tilde{Y}}(\Gamma,s) \in \mathcal{RH}_\infty^{2n \times 2n} $ be connected in closed loop as in Fig. \ref{fig:decentralizedfb}, { and both have no poles on $\textup{j}\mathbb{R}\cup \{\infty\}$}. If $ \forall i\in\mathcal{C}$, and $\forall\;\omega \in(0,\infty) $, there exists a set  $\mathcal{S}_i\subseteq \phi_i$ such that, 
$\forall \gamma_i\in\mathcal{S}_i$,
\begin{align}
    \tilde{\rho}_{i,0}(\omega) >\max\Set{0,\bigg| {\sum_{\forall k}}c_{i,k}(\omega) \gamma_{i,k}\bigg|-{\sum_{\forall k}}r_{i,k}(\omega) \abs{\gamma_{i,k}}},
\label{eqn:constraint_rho}
\end{align}
where 
\begin{align}\label{eqn:rho_0}
\tilde{\rho}_{i,0}(\omega):=\inf_{\tau \in (0,1]}\operatorname{d}\left(\operatorname{SRG}(Y_{0}(\mathrm{j}\omega)),-\tau\,\operatorname{SRG}(\widehat{Y}_{\mathrm{grid}}(\mathrm{j}\omega))\right),
\end{align}
then $\rho_i(\gamma_i,\mathrm{j}\omega) > 0$ for all $\omega$, and the system is $\mathcal{L}_2$-stable.
\end{theorem}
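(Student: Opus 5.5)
The plan is to reduce the statement to Theorem~\ref{thm:dec_SRG_S} by lower-bounding the margin $\rho_i(\gamma_i,\mathrm{j}\omega)$ of each converter frequency-wise. Fix $i\in\mathcal{C}$, $\omega>0$ and $\gamma_i\in\mathcal{S}_i$. By Lemma~\ref{lemma:SRG_LPV_affine},
\begin{align*}
\operatorname{SRG}(Y_i(\gamma_i,\mathrm{j}\omega))\subseteq \operatorname{SRG}(Y_{i,0}(\mathrm{j}\omega))\oplus M_i(\gamma_i,\omega),
\end{align*}
where $M_i(\gamma_i,\omega):=\bigoplus_{k}\gamma_{i,k}\operatorname{SRG}(\overline{Y}_{i,k}(\mathrm{j}\omega))$. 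Since the distance between sets is monotone under inclusion, $\rho_i$ is bounded below by the distance computed with this enlarged set. The question therefore becomes how far the Minkowski perturbation $M_i$ can push $\operatorname{SRG}(Y_{i,0})$ toward $-\tau\operatorname{SRG}(\widehat{Y}_{\mathrm{grid}})$.

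The key geometric step is a triangle-type inequality for Minkowski sums. For sets $A,B,C\subset\mathbb{C}$,
\begin{align*}
\operatorname{d}(A\oplus B,C)\ge \operatorname{d}(A,C)-\sup_{b\in B}\abs{b}.
\end{align*}
More precisely, for any $b\in B$ one has $\operatorname{d}(A+b,C)\ge \operatorname{d}(A,C)-\abs{b}$. Applying this uniformly in $\tau$ and taking the infimum gives $\rho_i\ge \tilde\rho_{i,0}(\omega)-\sup_{z\in M_i}\abs{z}$.

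The bound then has to be expressed through the real-axis data $c_{i,k},r_{i,k}$. Each $\operatorname{SRG}(\overline{Y}_{i,k})$ is convex and symmetric about the real axis. Its real projection is $[c_{i,k}-r_{i,k},\,c_{i,k}+r_{i,k}]$, and scaling by $\gamma_{i,k}$ maps this to an interval with center $\gamma_{i,k}c_{i,k}$ and half-width $\abs{\gamma_{i,k}}r_{i,k}$. Projections commute with Minkowski sums, so the real projection of $M_i$ is the interval centered at $\sum_k c_{i,k}\gamma_{i,k}$ with half-width $\sum_k r_{i,k}\abs{\gamma_{i,k}}$. The quantity $\max\{0,\abs{\sum_k c_{i,k}\gamma_{i,k}}-\sum_k r_{i,k}\abs{\gamma_{i,k}}\}$ is exactly the distance of this interval from the origin. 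I would therefore aim to use the sharper fact that only the real-axis displacement of the perturbation matters, not $\sup\abs{z}$.

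That sharper fact is the main obstacle. The plan exploits the structure of the grid set: $-\tau\operatorname{SRG}(\widehat{Y}_{\mathrm{grid}})$, for $\tau\in(0,1]$, is a family of disks centered on the real axis that shrink toward the origin. If $M_i$ contains a point with real part in the projected interval, conjugate symmetry and convexity give a real point of $M_i$. I would then argue that the relevant contraction is the real-axis distance $\operatorname{d}(0,\operatorname{proj}_{\mathbb{R}}M_i)$. The route is to use that the worst-case translation of $\operatorname{SRG}(Y_{i,0})$ toward the disk family is along the real axis, because the disks are real-centered. Concretely, I would bound the distance between $A\oplus M_i$ and a real-centered disk $D$ by reducing to translations along the real axis inside $M_i$ and then applying the triangle inequality in one dimension. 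If this cannot be made rigorous in general, I would fall back on the chord-property interpretation used in the paper. That interpretation treats each $\overline{Y}_{i,k}$ through its real-axis chord $[c_{i,k}-r_{i,k},c_{i,k}+r_{i,k}]$, consistent with Property~\ref{property:chord}, and the Minkowski shift then becomes a real interval whose nearest point to $0$ has modulus equal to the $\max\{\cdot\}$ term.

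Once $\rho_i(\gamma_i,\mathrm{j}\omega)\ge\tilde\rho_{i,0}(\omega)-\max\{0,\dots\}>0$ holds by \eqref{eqn:constraint_rho} for all $\omega>0$, the intersection condition \eqref{eqn:sufficient_decentralized} holds for every $\tau$ and every $i$. At $\omega=0$ I would argue by continuity, using closedness of the SRGs and the absence of poles on $\mathrm{j}\mathbb{R}$. Theorem~\ref{thm:dec_SRG_S} then yields $\mathcal{L}_2$-stability.
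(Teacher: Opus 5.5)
There is a genuine gap, and it is exactly the step you flag as the main obstacle. Your first steps (Lemma~\ref{lemma:SRG_LPV_affine}, monotonicity of the distance, the Minkowski shift) match the paper's. Your bound $\rho_i(\gamma_i,\mathrm{j}\omega)\ge\tilde\rho_{i,0}(\omega)-\sup_{z\in M_i}\abs{z}$ is correct. However, the quantity $\max\{0,\abs{\sum_k c_{i,k}\gamma_{i,k}}-\sum_k r_{i,k}\abs{\gamma_{i,k}}\}$ equals $\operatorname{d}(M_i,\{0\})=\inf_{z\in M_i}\abs{z}$, the modulus of the \emph{nearest} point of $M_i$. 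Translating $\operatorname{SRG}(Y_{i,0})$ by elements of $M_i$ can move it by as much as the \emph{farthest} point. No reduction to the real axis can replace that supremum by the infimum. Already in one dimension, $A=\{1\}$, $M=[-2,2]$, $C=\{-1\}$ give $\operatorname{d}(A\oplus M,C)=0$, although $\operatorname{d}(A,C)=2$ and $\operatorname{d}(M,\{0\})=0$. Your chord-based fallback has the same defect, since it again uses the point of the projected interval closest to $0$. Nor can you borrow the paper's argument. The appendix bridges this step with Property~\ref{property:triangle} for $C=\{0\}$, but $\operatorname{d}(A,B)\ge\abs{\operatorname{d}(A,C)-\operatorname{d}(B,C)}$ fails for inf-distances between non-singleton sets: $A=\{1\}$, $B=[-2,2]$, $C=\{0\}$ gives $0\ge 1$. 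Only the one-sided version $\operatorname{d}(A,B)\ge\operatorname{d}(A,C)-\sup_{b\in B}\operatorname{d}(b,C)$ holds, and that is your bound.

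In fact the statement cannot be proved as written. Take a single converter with $\Theta=0$ and constant admittances $Y_{\text{grid}}=I_2$, $Y(\gamma)=I_2+\gamma\operatorname{diag}(2,-2)$.
\begin{itemize}
\item $\operatorname{SRG}(Y_0)=\{1\}$, and $\operatorname{SRG}(Y_1)$ is the circle of radius $2$ about the origin. Hence $c_1=0$, $r_1=2$, and the right-hand side of \eqref{eqn:constraint_rho} vanishes for every $\gamma$.
\item For any disk approximation of $\operatorname{SRG}(Y_{\text{grid}})=\{1\}$ contained in the open right half-plane, $\tilde\rho_0(\omega)=1$.
\item So $\gamma=1$ is certified, yet $Y(1)=\operatorname{diag}(3,-1)$ gives $-1\in\operatorname{SRG}(Y(1))\cap(-\operatorname{SRG}(\widehat{Y}_{\text{grid}}))$, hence $\rho=0$.
\item Moreover, $Y_{\text{grid}}+Y(1)=\operatorname{diag}(4,0)$ is singular, so the interconnection is not even well posed, let alone $\mathcal{L}_2$-stable.
\end{itemize}
A warning sign is that the stated condition gets easier to satisfy as $r_{i,k}$ grows.

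What your argument does establish is a valid certificate. The largest modulus in $\operatorname{SRG}(\overline{Y}_{i,k})$ is the induced norm $\norm{Y_{i,k}(\mathrm{j}\omega)}$, so $\tilde\rho_{i,0}(\omega)>\sum_k\abs{\gamma_{i,k}}\norm{Y_{i,k}(\mathrm{j}\omega)}$ suffices. When the hulls are disks centered on the real axis, this sharpens to $\tilde\rho_{i,0}(\omega)>\abs{\sum_k c_{i,k}\gamma_{i,k}}+\sum_k r_{i,k}\abs{\gamma_{i,k}}$, with a plus sign.

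Separately, continuity at $\omega=0$ only yields $\rho_i\ge 0$ there. You would need the strict condition at $\omega=0$ itself to invoke Theorem~\ref{thm:dec_SRG_S}.
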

\begin{proof}
Proof can be found in Appendix \ref{proof:Thm}.
\end{proof}

Theorem~\ref{thm:main} offers a scalable characterization of each converter stability set, replacing intractable high-dimensional analysis with decentralized conditions that avoid the scaling issues of centralized methods.  {Depending on the converter admittance, this leads either to explicit bounds on  the operating current $i_{dq0}$ (in the GFL case), or to inequalities that define the admissible region of~$\gamma$, which can subsequently be mapped back to constraints on $i_{dq0}$.}

\begin{example}
Consider the converter model of Example~\ref{example:LPV} together with the grid admittance in~\eqref{eqn:Ygrid}, $\tilde{\rho}_{i,0}(\omega)$ can then be computed using~\eqref{eqn:rho_0}, as illustrated in Fig.~\ref{fig:rho_gfl}. 

\begin{figure}[ht]
    \centering
    \includegraphics[width=1\linewidth]{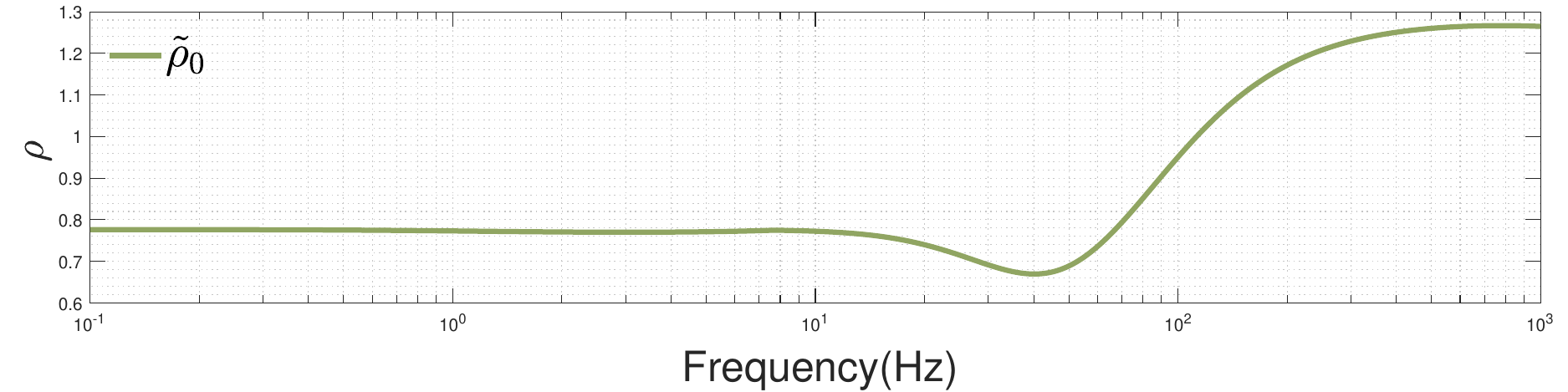}
    \caption{Frequency-wise SRG stability margin $\tilde{\rho}_{i,0}(\omega)$ for GFL.}
    \label{fig:rho_gfl} 
    \vspace{-0.4cm}
\end{figure}

\end{example}

\subsection{Parametric Feasible Set Characterization}
Although Theorem \ref{thm:main} already defines the feasibility space $\mathcal{S}$, we now detail a procedure for obtaining a graphical representation of it. For each converter $i$, the procedure begins by evaluating $\tilde{\rho}_{0}(\omega)$ using~\eqref{eqn:rho_0}. For every $\omega$, we compute the half-width $r_{k}(\omega)$ and center $c_{k}(\omega)$ from $\SRG{\overline{Y}_{k}(s)}$ for all $k$. On top of~\eqref{eqn:constraint_rho}, we introduce operative bounds that are not considered in the admittance transfer function, for instance: 
\begin{align}
    i_{d0}^2+i_{q0}^2 \leq 1.
    \label{eqn:box_bounds}
\end{align}
We define $t = \operatorname{sgn}\!\left(\sum_{\forall k} c_{k}(\omega) \gamma_{k}\right)\in\{\pm 1\}$ and $s_{k} = \operatorname{sgn}(\gamma_{k}) \in \{\pm 1\}$ for all $k$.  In any region where all $s_{k}$, $t$, and $\omega$ are fixed,~\eqref{eqn:constraint_rho} reduces to the following feasibility problem:
\begin{align}\label{lin-ineq}
    \mathcal{F}(s_{k},t,\tilde{\rho}_{0}(\omega)) = 
    \big\{(i_{d0},i_{q0}) \,\big|\,& i_{d0}^2+i_{q0}^2 < 1, \\
    t \Big( { \sum_{\forall k}} c_{k}(\omega)\gamma_{k} (\omega)\Big) \ge &0, \; 
    s_{k} r_{k}(\omega)\gamma_{k}(\omega) \ge 0 \;\forall k, \nonumber \\
     \tilde{\rho}_{0}(\omega) > { \sum_{\forall k}} r_{k}(\omega) &\gamma_{k}(\omega) 
    - { \sum_{\forall k}} c_{k}(\omega)\gamma_{k}(\omega) \big\}. \nonumber
\end{align}

Therefore, the stable operation set is obtained by taking the union over all sign choices 
$s_{k},t \in \{\pm1\}$:
\begin{align}
    \mathcal{F}(\tilde{\rho}_{0}(\omega)) = 
     {\bigcup_{s_{k},t\in\{\pm1\}}} \mathcal{F}(s_{k},t,\tilde{\rho}_{0}(\omega)).
    \label{eqn:feasible_set}
\end{align}
This set is given by the union of at most $2^{n_p+1}$ (possibly empty) sets. Finally, the stability region $\mathcal{S}$ can be found as 
\begin{align*}
    \mathcal{S} = { \bigcap_{\omega\in \mathbb{R}_{>0}}} \mathcal{F}(\tilde{\rho}_{0}(\omega)).
\end{align*}

\begin{remark}[On the $i_{dq0}$ bounds]
 {We assume that GFL and GFM converters can both supply and absorb power from the grid, i.e., $i_{dq0}$ may take positive or negative values. However, in some setups (such as converters interfacing loads or PV panels) $i_{dq0}$ is restricted to only positive or  negative values. In such cases, some $s_{k}$ become fixed, reducing the number of cases to evaluate in $\mathcal{F}(\tilde{\rho}_{i,0}(\omega))$.}
\end{remark}
\subsection{Complexity and Conservatism Analysis}
The proposed method requires less computation than a direct SRG-based stability check, which involves verifying Theorem~\ref{thm:dec_SRG_S} for each converter over the space $\phi$. In our approach,  $\SRG{\overline{Y}_k(s)}\; \forall k$ are computed only once at each frequency using the approach in \cite{pates2021scaled}. This step requires roughly $\mathcal{O}(N_\omega Q n_s^3)$ operations, where $N_\omega$ is the number of frequency samples, $Q$ is the number of angular points used to trace each SRG boundary, and $n_s$ is the system dimension. After this one-time precomputation, the feasible set is obtained by evaluating at most $2^{n_p+1}$ sign pattern combinations. The resulting complexity is therefore $\mathcal{O}(N_\omega Q n^3 + N_\omega 2^{n_p})$. {Notice that the term $2^{n_p}$ is always fairly small; for instance, in the case of GFL, $n_p=2$.}

By contrast, the direct verification approach must check the SRG separation condition for every parameter value $\gamma$ within the region $i_{d0}^2 + i_{q0}^2 \le 1$. Because each evaluation requires recomputing $\SRG{Y_i(\gamma,s)}$ for every $(\gamma,\textup{j}\omega)$ pair, its cost grows as $\mathcal{O}(N_\omega N_\gamma Q n^3)$, where $N_\gamma$ is the number of parameter samples. Since $N_\gamma$ increases rapidly with the desired resolution, this approach becomes computationally expensive even for small system sizes.

While the proposed approach is computationally tractable, it introduces three sources of conservatism. First, the grid SRG is bounded using a frequency-wise disk approximation\cite{Baron2025decentralized}. Second, the chord approximation of each parameter-affine transfer function $Y_k(s)$ adds extra SRG points to the stability assessment. Third, geographical information is not considered.  {While this has minimal impact on the 4-node system case study in Section \ref{subsec:4node}}, it  yields conservative results when grid admittance varies across converter locations. This limitation is inherent to decentralized stability frameworks, which disregard spatial information about converter interconnections\cite{huang2024gain,Haberle2025}.

\section{Case Studies and Validation}

To demonstrate the effectiveness of the proposed methodology,  {we use two validation studies, using a 4-bus system and a modified IEEE 14-node system.} 
\subsection{4-Node System}\label{subsec:4node}

The test configuration comprises a GFM converter and a GFL converter interfaced with an external grid, as illustrated in Fig.~\ref{fig:testsystem}.  Detailed system admittances and parameters are provided in Appendix~\ref{Appx:GFL_admittance} and Appendix~\ref{appendix:GFL_GFM}.

\begin{figure}[ht]
\vspace{-.5cm}
\centering 
    \begin{tikzpicture}[scale=1, every node/.style={transform shape}]
\draw[ thick,color=color2bg] (-0.4,1.5) node[left]{IBR$_{\text{GFM}}$};
\draw[ thick,color=color3bg] (-0.4,2.5) node[left]{IBR$_{\text{GFL}}$};

\draw[ thick,color=black] (6.35,2.63) node[left]{Inf bus};



\draw
(0.5,1)  coordinate(2-node) 
(2-node) --++ (0,0.3) node[above]{}
(2-node)--++(0,0.8) node[right]{2} 
coordinate[pos=0.3](2-r) 
coordinate[pos=0.6](2-rr)
coordinate[pos=0.9](2-rrr)
;
\draw (0,2.5)node[sacdcshape,scale=-.5,color=color3bg,fill=white] (DC1){};

\draw
(0.5,2)  coordinate(3-node) 
(3-node) --++ (0,0.3) node[above]{}
(3-node)--++(0,0.8) node[right]{3} 
coordinate[pos=0.3](3-r) 
coordinate[pos=0.6](3-rr)
coordinate[pos=0.9](3-rrr)
;
\draw (0,1.5)node[sacdcshape,scale=-.5,color=color2bg,fill=white] (DC2){};
\draw
(3,1)  coordinate(4-node) 
(4-node) --++ (0,0.3) node[above]{}
(4-node)--++(0,2) node[left]{4} 
coordinate[pos=0.3](4-r) 
coordinate[pos=0.6](4-rr)
coordinate[pos=0.9](4-rrr)
;

\draw
(5,1.6)  coordinate(5-node) 
(5-node)--++(0,1) node[left]{1} 
coordinate[pos=0.3](5-r) 
coordinate[pos=0.6](5-rr)
coordinate[pos=0.9](5-rrr);

\draw[-stealth](4-r)--++(0.25,0)--++(0,-0.5cm);


\draw
(2-rr)--++(-0.23,0) 
(3-rr)--++(-0.23,0)
(2-rr) to [twoport,bipoles/twoport/width=0.5,a=$Z_2$,bipoles/twoport/height=0.22, fill=white]  (3,1.5)
(3-rr) to [twoport,bipoles/twoport/width=0.5,a=$Z_3$,bipoles/twoport/height=0.22, fill=white]  (3,2.5)
(4-rr) to [twoport,bipoles/twoport/width=0.5,a=$Z_g$,bipoles/twoport/height=0.22, fill=white]  (5-rr)
(5-rr) --++(0.2,0);
\draw
(5.45,2.5) node[below,sin v source]{}
;
\end{tikzpicture}
\caption{GFM and GFL converter test system configuration showing grid topology.}
\label{fig:testsystem}
\end{figure}
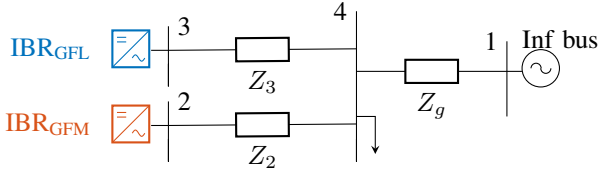

First, the feasible operating region of each converter is characterized, establishing operational boundaries. Representative operating points, both within and beyond these regions, are then selected, and time-domain simulations are carried out to validate the analytical predictions. The external grid impedance is modeled as an inductive load characterized by the admittance matrix \eqref{eqn:Ygrid}, where the aggregate power consumption is $P=0.7$ pu and $Q=0.6$ pu. We include the lines directly into the converter admittance matrices. The frequency-dependent stability margins $\tilde{\rho}_{i,0}(\omega)$ are computed for both converters and presented in Figs.~\ref{fig:rho_gfl} and~\ref{fig:rho_gfm}. The GFL converter has lower stability margins than the GFM converter across the entire frequency spectrum. 

\begin{figure}[ht]
    \vspace{-.4cm}
    \centering
    \includegraphics[width=1\linewidth]{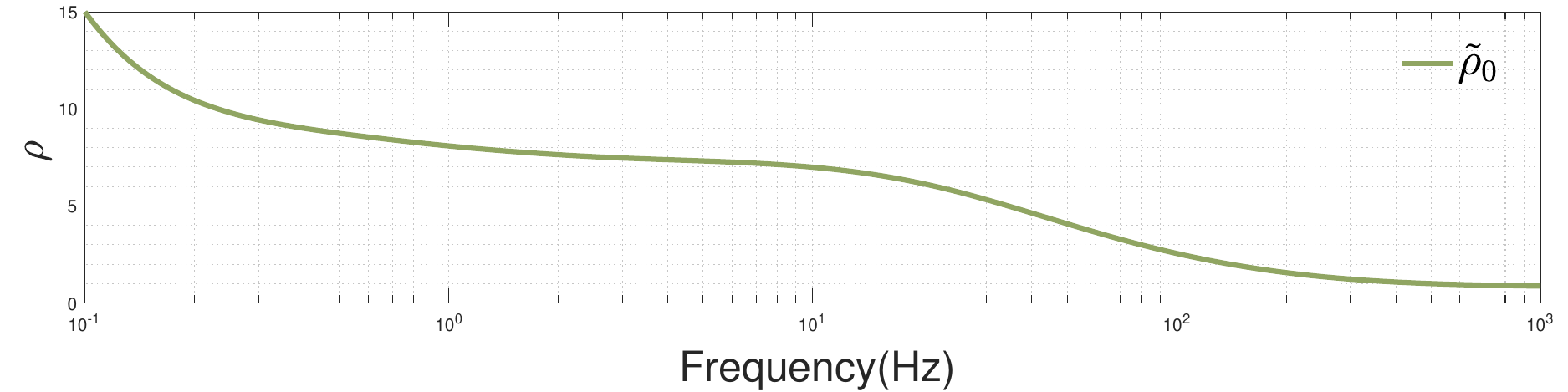}
    \caption{Frequency-wise SRG stability margin $\tilde{\rho}_{i,0}(\omega)$ for GFM.}
    \label{fig:rho_gfm}
\end{figure}

Through the computation of  $r_{k}(\omega)$, and $c_{k}(\omega)$ for each converter, we construct the complete feasibility regions $\mathcal{S}_i$, as depicted in Fig.~\ref{fig:gamma_gfm_gfl}.  {The distinct stability margins $\rho_{0}(\omega)$ induce markedly different operating constraints: the GFL converter admits a narrow diamond-shaped feasible region, while the GFM converter is stable over most of \eqref{eqn:box_bounds}, limited mainly by the maximum deliverable $i_{q0}$, which truncates the circular region at the top and bottom.} 

\begin{figure}[ht]
    \vspace{-.4cm}
    \centering
    \includegraphics[width=0.465\linewidth]{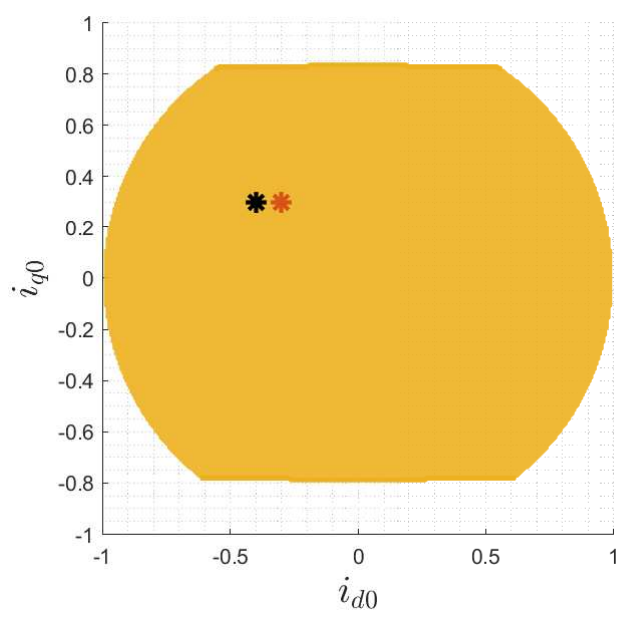}
    \includegraphics[width=0.465\linewidth]{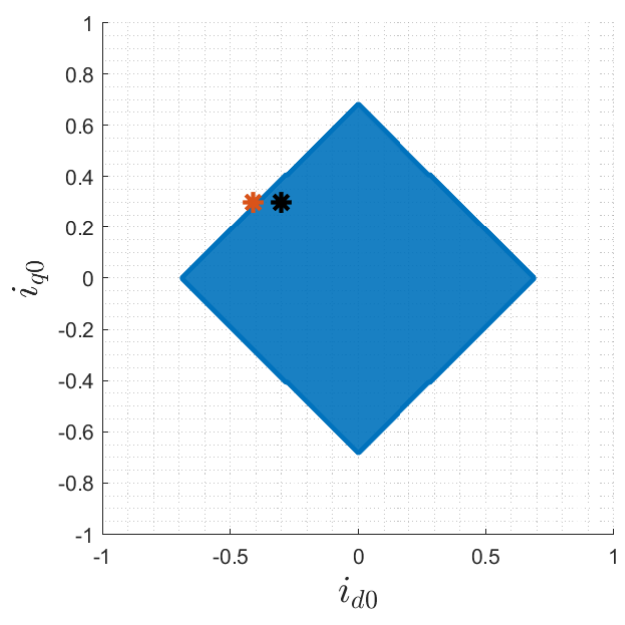}
    \caption{Computed stability operation regions: (left) GFM converter ($\mathcal{S}_{\textup{gfm}}$). (right) GFL converter ($\mathcal{S}_{\textup{gfl}}$). Black asterisk correspond to the stable setpoint and red asterisk correspond to the unstable setpoint.}
    \label{fig:gamma_gfm_gfl}
    \vspace{-.4cm}
\end{figure}
 
To validate the accuracy of the computed stability regions, we conduct time-domain simulations at two operating points. The first validation scenario employs operating points within the predicted stable region: $P_{gfm}=0.4$ pu, $Q_{gfm}=0.3$, and $P_{gfl}=0.3$ pu, $Q_{gfl}=0.3$. As shown in Fig.~\ref{fig:Stable}, the system exhibits transient oscillations that are damped within 0.5 seconds  {for the GFL converter, but 3 seconds for the GFM, both} converging to a stable steady-state condition  {after applying a load step $P=0.06 $ pu in  node 4}. With the reference voltage settings $v_{d}^*=1$ pu and $v_{q}^*=0$ pu, these power setpoints correspond to current references $i_{q,gfm}=0.3$, $i_{d,gfm}=-0.4$ pu and $i_{q,gfl}=0.3$, $i_{d,gfl}=-0.3$ pu, which reside within the blue feasible region in both converters.  {After the perturbation, $i_{q,gfm}$ increases to maintain the voltage at node 2.  The active power required by the disturbance is supplied by the infinite bus.}

\begin{figure}[ht] 
    \centering
    \includegraphics[width=0.91\linewidth]{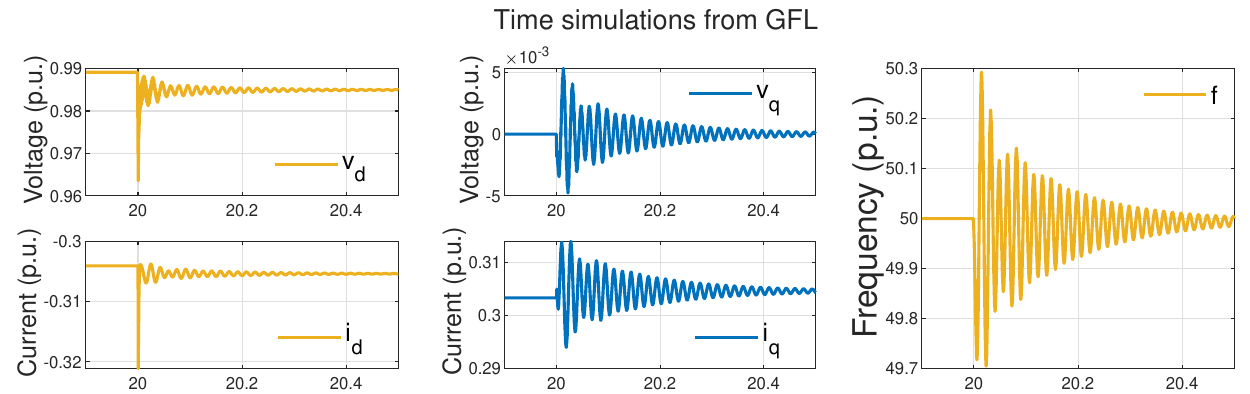}
    \includegraphics[width=0.91\linewidth]{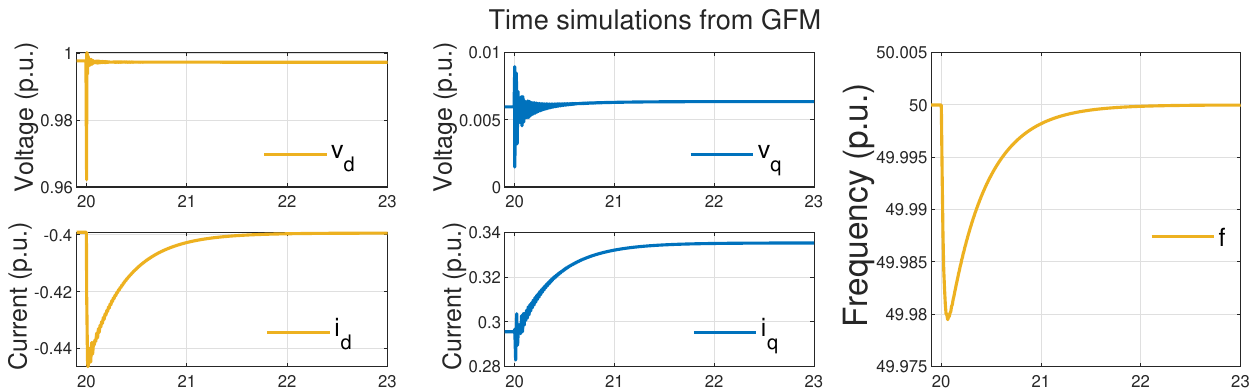}    
    \caption{Time-domain validation of stable operation within predicted feasibility region. Top row corresponds to GFL converter and bottom row to GFM converter}
    \label{fig:Stable}
\end{figure}

The second validation scenario tests an operating point outside the predicted stable region. For setpoints $P_{gfm}=0.3$ pu, $Q_{gfm}=0.3$, and $P_{gfl}=0.4$ pu, $Q_{gfl}=0.3$ pu, the GFL converter exceeds its computed stability boundary. As shown in Fig.~\ref{fig:unstable}, the system becomes unstable, exhibiting divergent oscillations, which confirms our analytical predictions.

\begin{figure}[ht]
    \centering
    \includegraphics[width=0.91\linewidth]{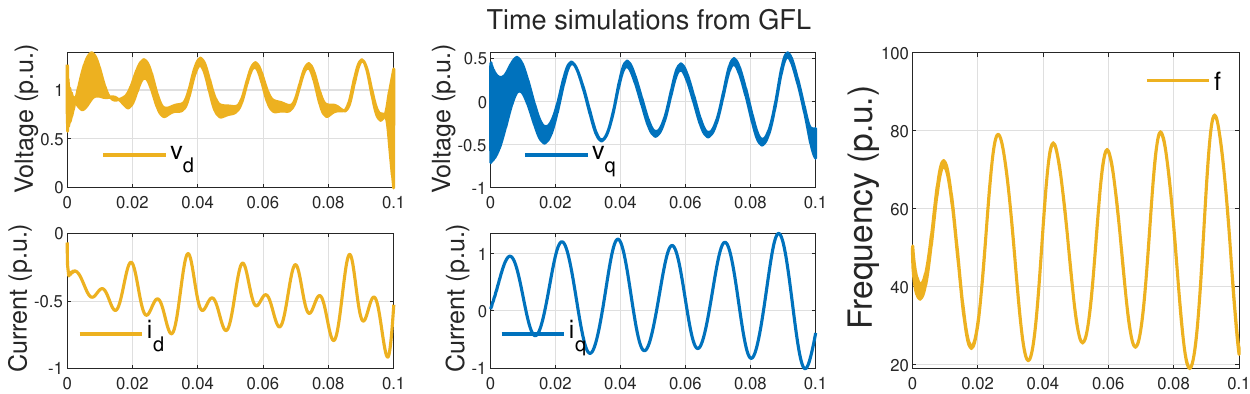}
    \includegraphics[width=0.91\linewidth]{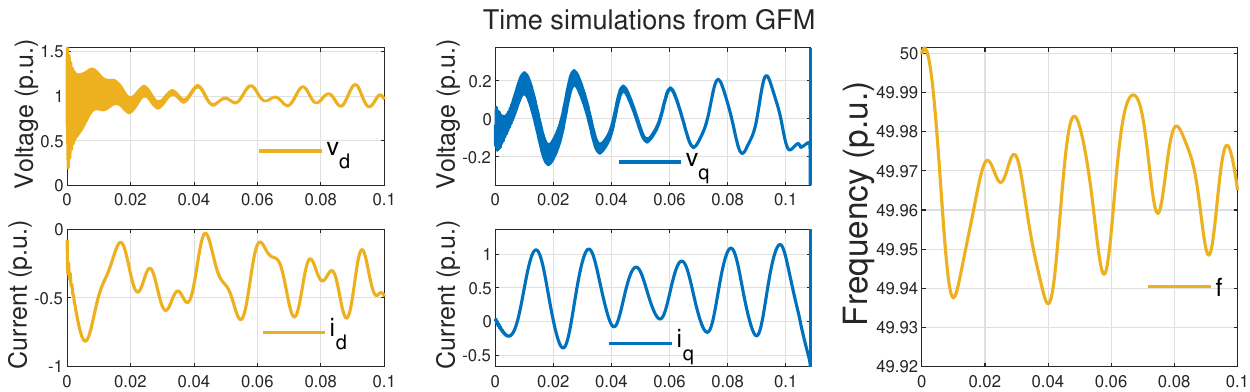}
    \caption{Time-domain validation confirming instability outside predicted feasibility region. Top row corresponds to GFL converter and bottom row to GFM converter}
    \label{fig:unstable}
\vspace{-.5cm}
\end{figure}

\subsection{IEEE 14-Node System}
 {The slack node is located in node 1 and modeled as an infinite bus, while GFL converters are located in nodes 2, 3, 6, and 8 with admittances $Y_c^{N_2}(s),Y_c^{N_3}(s),Y_c^{N_6}(s)$ and $Y_c^{N_8}(s)$, respectively.  {Note that all converters analyzed are GFL, but all have different $f_{\text{pll}}=\{5,10,20,60\}$} as shown in Appendix \ref{appendix:GFL_GFM}.}

\begin{figure}[htb]
\centering
\begin{tikzpicture}[scale=0.75, every node/.style={transform shape}]
\draw[ thick,color=color1bg] (0.5,0.4) to [bend left=90] (-0.5,0.4)node[left]{PCC$_2$};
\draw[ thick,color=color2bg] (7.5,0.4) to [bend right=90] (8.5,0.4)node[right]{PCC$_3$};
\draw[ thick,color=color3bg] (-1.2,-3.25)node[left]{PCC$_6$} to [bend right=90] (-0.6,-3.25);
\draw[ thick,color=color4bg] (7.5,-3.9) to [bend left=90] (8.5,-3.9)node[right]{PCC$_8$};
\draw

(0,-1.5) coordinate(1-node) 
(1-node) --++ (0,-0.15) node[below,sin v source]{}
(1-node)--++(-0.1,0) 
(1-node)--++(1,0) node[right]{1} 
coordinate[pos=0.3](1-r) 
coordinate[pos=0.6](1-rr)
coordinate[pos=0.9](1-rrr)

(0,0)  coordinate(2-node) 
(2-node) --++ (0,0.3) node[above]{}
(2-node)--++(-0.1,0) 
(2-node)--++(1,0) node[right]{2} 
coordinate[pos=0.3](2-r) 
coordinate[pos=0.6](2-rr)
coordinate[pos=0.9](2-rrr)
;
\draw (0,0.5)node[sacdcshape,scale=-.5,color=color1bg,fill=white] (DC1){};
\draw
(8,0)  coordinate(3-node) 
(3-node) --++ (0,0.3) node[above]{}
(3-node)--++(0.1,0) 
(3-node)--++(-1,0) node[left]{3} 
coordinate[pos=0.3](3-r) 
coordinate[pos=0.6](3-rr)
coordinate[pos=0.9](3-rrr)
;
\draw (8,0.5)node[sacdcshape,scale=-.5,color=color2bg,fill=white] (DC2){};
\draw

(8,-1.5)  coordinate(4-node) 
(4-node)--++(0.1,0) 
(4-node)--++(-1,0) node[left]{4} 
coordinate[pos=0.3](4-r) 
coordinate[pos=0.6](4-rr)
coordinate[pos=0.9](4-rrr)


(5,-1.5)  coordinate(5-node) 
(5-node)--++(0.1,0) 
(5-node)--++(-1,0) node[left]{5} 
coordinate[pos=0.3](5-r) 
coordinate[pos=0.6](5-rr)
coordinate[pos=0.9](5-rrr)


(0,-3.5)  coordinate(6-node) 
(6-node)--++(0.1,0) 
(6-node)--++(-1,0) node[left]{6} 
coordinate[pos=0.3](6-r) 
coordinate[pos=0.6](6-rr)
coordinate[pos=0.9](6-rrr)
(6-rrr) --++ (0,0.3) node[above]{}
;
\draw (-0.9,-3)node[sacdcshape,scale=-0.5,color=color3bg,fill=white] (DC3){};
\draw

(8,-2.5)  coordinate(7-node) 
(7-node)--++(0.1,0) 
(7-node)--++(-1,0) node[left]{7} 
coordinate[pos=0.3](7-r) 
coordinate[pos=0.6](7-rr)
coordinate[pos=0.9](7-rrr)

(8,-3.5)  coordinate(8-node) 
(8-node)--++(0.1,0) 
(8-node)--++(-1,0) node[left]{8} 
coordinate[pos=0.3](8-r) 
coordinate[pos=0.6](8-rr)
coordinate[pos=0.9](8-rrr)
(8-node) --++ (0,-0.3) node[below]{}
;
\draw (8,-4)node[sacdcshape,scale=.5,color=color4bg,fill=white] (DC4){};
\draw
(6,-3.5)  coordinate(9-node) 
(9-node)--++(0.1,0) 
(9-node)--++(-1,0) node[left]{9} 
coordinate[pos=0.3](9-r) 
coordinate[pos=0.6](9-rr)
coordinate[pos=0.9](9-rrr)

(4,-3.5)  coordinate(10-node) 
(10-node)--++(0.1,0) 
(10-node)--++(-1,0) node[left]{10} 
coordinate[pos=0.3](10-r) 
coordinate[pos=0.6](10-rr)
coordinate[pos=0.9](10-rrr)


(2,-3.5)  coordinate(11-node) 
(11-node)--++(0.1,0) 
(11-node)--++(-1,0) node[left]{11} 
coordinate[pos=0.3](11-r) 
coordinate[pos=0.6](11-rr)
coordinate[pos=0.9](11-rrr)

(0,-5)  coordinate(12-node) 
(12-node)--++(0.1,0) 
(12-node)--++(-1,0) node[left]{12} 
coordinate[pos=0.3](12-r) 
coordinate[pos=0.6](12-rr)
coordinate[pos=0.9](12-rrr)

(3.5,-5)  coordinate(13-node) 
(13-node)--++(0.1,0) 
(13-node)--++(-1,0) node[left]{13} 
coordinate[pos=0.3](13-r) 
coordinate[pos=0.6](13-rr)
coordinate[pos=0.9](13-rrr)

(7,-5)  coordinate(14-node) 
(14-node)--++(0.1,0) 
(14-node)--++(-1,0) node[left]{14} 
coordinate[pos=0.3](14-r) 
coordinate[pos=0.6](14-rr)
coordinate[pos=0.9](14-rrr)
;
\draw[-stealth](2-node)--++(0,-0.5cm);
\draw[-stealth](3-node)--++(0,-0.5cm);
\draw[-stealth](4-node)--++(0,-0.5cm);
\draw[-stealth](5-node)--++(0,-0.5cm);
\draw[-stealth](6-node)--++(0,-0.5cm);
\draw[-stealth](9-node)--++(0,-0.5cm);
\draw[-stealth](10-node)--++(0,-0.5cm);
\draw[-stealth](11-node)--++(0,-0.5cm);
\draw[-stealth](12-node)--++(0,-0.5cm);
\draw[-stealth](13-node)--++(0,-0.5cm);
\draw[-stealth](14-node)--++(0,-0.5cm);


\draw
(1-r)--(2-r) 
(1-rrr)--++(0,0.25)--++(3.2,0)--++(0,-0.25) 
(2-rr)--++(0,-1)--++(3.75,0)--++(0,-0.5) 
(2-rrr)--++(0,-0.5)--++(6.65,0)--++(0,-1) 
(2-rrr)--++(0,0.25)--++(6.2,0)--++(0,-0.25) 
(3-r)--(4-r) 
(4-rrr)--++(0,0.25)--++(-2.2,0)--++(0,-0.25) 
(5-r)--++(0,-1)--++(-5.25,0)--++(0,-1) 
(6-r)--++(0,0.25)--++(1.5,0)--++(0,-0.25) 
(11-r)--++(0,0.25)--++(1.5,0)--++(0,-0.25) 
(10-r)--++(0,0.25)--++(1.5,0)--++(0,-0.25) 
(8-r)--(7-r) 
(4-rr)--(7-rr) 
(9-node)--++(0,0.5)--++(1.25,0)--++(0,0.5)
(9-r)--++(0,1.5)--++(1.5,0)--++(0,0.5)
(9-rr)--++(0,-0.75)--++(1.5,0)--++(0,-0.75) 
(12-r)--++(0,0.25)--++(3,0)--++(0,-0.25) 
(13-r)--++(0,0.25)--++(3,0)--++(0,-0.25) 
(6-rrr)--(12-rrr)%
(6-rr)--++(0,-0.75)--++(3.5,0)--++(0,-.75)
;
\end{tikzpicture}
\caption{Modified IEEE 14 node system.}
\label{fig:14_node_IEEE}
\end{figure}
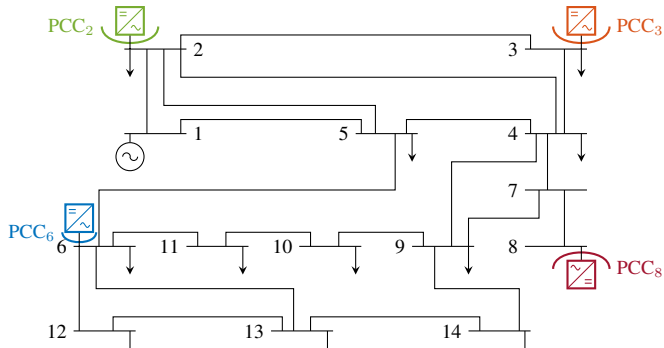

 {It is worth noting that the feasible set expands as $f_{\text{pll}}$ decreases, as illustrated in Fig.~\ref{fig:gamma_14_bus}. This behavior underscores the well-known sensitivity of GFL converters to grid strength, where small-signal stability is strongly conditioned by the PLL bandwidth. As $f_{\text{pll}}$ increases, the converter becomes more sensitive to the external network impedance, effectively requiring a stiffer grid to maintain stable operation. This observation is consistent with findings reported in the literature~\cite{Li_duality_2022,baronprada2026powergrids}.}

\begin{figure}[ht]
    \vspace{-.4cm}
    \centering
        \includegraphics[width=0.7\linewidth]{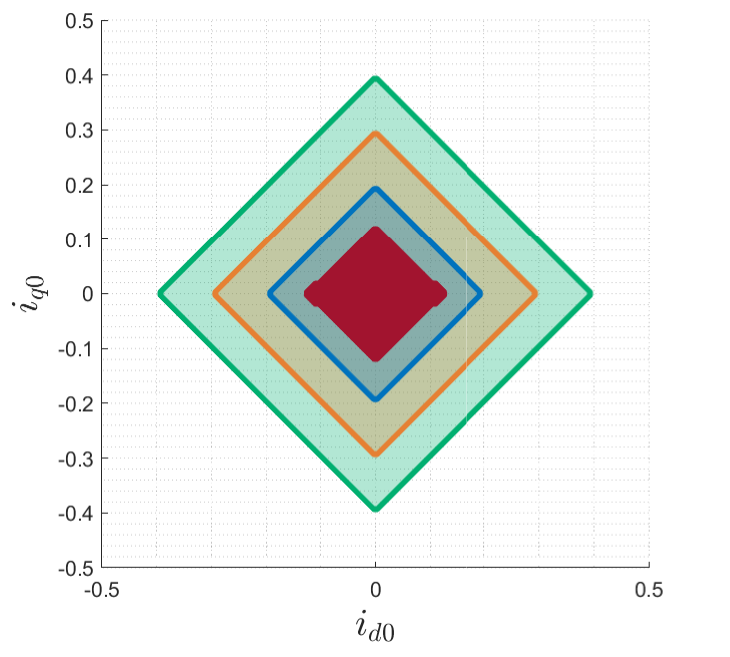}
    \caption{Computed stability operation regions:  GFL converters set estimation $\mathcal{S}({Y_c^{N_2}(s))}$ in green, $\mathcal{S}({Y_c^{N_3}(s))}$ in orange, $\mathcal{S}({Y_c^{N_6}(s))}$ in blue and, $\mathcal{S}({Y_c^{N_8}(s))}$ in red. }
    \label{fig:gamma_14_bus}
    \vspace{-.4cm}
\end{figure}
 {As with most decentralized stability conditions \cite{Haberle2025, huang2024gain}, the proposed framework does not exploit the specific electrical location of each converter within the network, a design choice that enables scalability at the expense of some conservatism. This conservatism is most pronounced in networks with highly nonuniform nodal strength, where location-aware centralized analysis can extract additional margin. In the case where grid strength is approximately uniform across connection points, the two approaches yield closely matching results.}

\section{Conclusions} \label{sec:conclusions}

 {This paper introduces a decentralized framework to identify stability-certified operating points for power systems with GFL and GFM converters. By exploiting the affine dependence of converter admittances on steady-state conditions, small-signal stability can be assessed locally using frequency-wise geometric tests. Each converter computes its admissible region through closed-form linear inequalities, enabling a unified treatment of both converter types. Moreover, the formulation naturally extends to multi-parameter operating sets, where the parameters of interest can be selected by the user according to the application requirements. Numerical results demonstrate the accuracy and computational efficiency of the approach. Future work will address dynamic grid topologies, SRG computation from black-box models, and the incorporation of geographical information to reduce conservatism.}

\bibliographystyle{IEEEtran}
\bibliography{PSCC.bib}

\appendices
\section{Supporting Results and Proofs} \label{proof:Thm}

For completeness, we explicitly characterize the distance between convex sets symmetric about the $x$-axis.

\subsection{Properties of the Distance Function}

We recall several properties of the distance function that are useful for our approach:

 \begin{fact}[Scaling dependence]
         For $\mu \in \mathbb{R}$, $$\operatorname{d}(\mu A, \mu B) = |\mu| \, \operatorname{d}(A,B).$$  
 \end{fact}
\begin{fact}[Minkowski-shift invariance] \label{property:shift}
    The distance is compatible with Minkowski addition. For sets $A,B,C \subset \mathbb{R}^2$,    $$    \operatorname{d}(A, B\oplus C) = \operatorname{d}(A \ominus C, B).$$
\end{fact}
\begin{proof}
By definition, $\operatorname{d}(A, B \oplus C) = \inf_{a \in A,\, b \in B,\, c \in C} \|a - (b + c)\| = \inf_{a \in A,\, b \in B,\, c \in C} \|(a - c) - b\| = \operatorname{d}(A \ominus C, B)$.
\end{proof}
\begin{fact}[Triangle inequality\cite{boyd2004convex}]\label{property:triangle}
    For sets $A,B,C \subset \mathbb{R}^2$, 
    \begin{align*}
        \operatorname{d}(A,B) \ge \abs{\operatorname{d}(A,C) -{\operatorname{d}(B,C)}}.
    \end{align*}
\end{fact}

\begin{lemma}[Distance for Symmetric Convex Sets]\label{thm:distance}
 Consider $\mathbb{R}^2$ with coordinates $(x,y)$. Let $A, B \subset \mathbb{R}^2$ be convex sets symmetric with respect to the $x$-axis, with projections on the $x$-axis $A_x = [\underline{a}, \overline{a}]$ and $B_x = [\underline{b}, \overline{b}]$. Define the centers and half-widths as $c_A = \tfrac{(\underline{a}+\overline{a})}{2}, \, r_A = \tfrac{(\overline{a}-\underline{a})}{2},
c_B = \tfrac{(\underline{b}+\overline{b})}{2}, \, r_B = \tfrac{(\overline{b}-\underline{b})}{2}.$

For real scalars $\mu_A, \mu_B$, let $A' := \mu_A A$ and $B' := \mu_B B$. The centers and half-widths transform as 
\begin{align*}
c_{A'}=\mu_A c_A, \, r_{A'}=|\mu_A| r_A, \,
c_{B'}=\mu_B c_B, \, r_{B'}=|\mu_B| r_B.
\end{align*} 
Then, the distance between $A'$ and $B'$ is
\begin{align*}
    \operatorname{d}(A',B') = \max\Big\{0,\, |\mu_A c_A - \mu_B c_B| - \big(|\mu_A| r_A + |\mu_B| r_B\big)\Big\}.
\end{align*}
\end{lemma}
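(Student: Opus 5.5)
The plan is to reduce the planar distance to a one-dimensional distance between intervals on the $x$-axis. Symmetry and convexity together force each set to contain its own projection onto that axis, which gives the upper bound. The fact that orthogonal projection is $1$-Lipschitz gives the matching lower bound.

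First I handle the scaling. For any real $\mu$, the set $\mu A$ is again convex and symmetric about the $x$-axis, since $(x,y)\mapsto \mu(x,y)$ commutes with the reflection $(x,y)\mapsto(x,-y)$. Its projection is $\mu A_x$. For $\mu\ge 0$ this is $[\mu\underline{a},\mu\overline{a}]$, and for $\mu<0$ it is $[\mu\overline{a},\mu\underline{a}]$. In both cases the midpoint is $\mu c_A$ and the half-width is $|\mu| r_A$, which are the stated transformation rules. It therefore suffices to prove the formula
\[
\operatorname{d}(A,B)=\max\{0,\,|c_A-c_B|-(r_A+r_B)\}
\]
for an arbitrary pair of convex sets symmetric about the $x$-axis, and then substitute $A',B'$.

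The key structural step is that $A\cap(\mathbb{R}\times\{0\})=A_x\times\{0\}$. Take $(x,y)\in A$. Symmetry gives $(x,-y)\in A$, and convexity gives the midpoint $(x,0)\in A$. So every point of the projection already lies in $A$ on the axis; the reverse inclusion is trivial. The same holds for $B$.

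For the upper bound, I restrict the infimum to points on the axis:
\[
\operatorname{d}(A,B)\le \inf_{a\in A_x,\,b\in B_x}|a-b|.
\]
For two intervals this equals the gap $\max\{0,\,|c_A-c_B|-r_A-r_B\}$, which follows from a short case check on whether the intervals overlap.

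For the lower bound, I use that the projection $\pi(x,y)=x$ satisfies $|\pi(p)-\pi(q)|\le\|p-q\|_{\mathbb{C}}$. Hence
\[
\operatorname{d}(A,B)\ge \operatorname{d}(A_x,B_x),
\]
which is the same interval gap.

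The only delicate point, which I expect to be the main obstacle, is the bookkeeping when the projections are not closed, or when the sets are unbounded. In that case $\underline{a},\overline{a}$ should be read as the infimum and supremum of $A_x$. Because every quantity involved is an infimum, the interval-gap formula still holds with these endpoint values, and the argument goes through unchanged.
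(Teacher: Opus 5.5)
Your proof is correct and follows the same route as the paper. You show $A_x\times\{0\}\subset A$, and you justify it via the midpoint of $(x,y)$ and $(x,-y)$, which the paper only asserts; you then get $\operatorname{d}(A',B')\le\operatorname{d}(A'_x,B'_x)$ by restricting to axis points and the reverse inequality because projection onto the $x$-axis is $1$-Lipschitz, and finish with the interval-gap case check. Your closing remark about unbounded sets is unnecessary, since the statement assumes $A_x=[\underline{a},\overline{a}]$, and it is not quite accurate: for an unbounded projection $c_A$ and $r_A$ are not finite and the formula is ill-defined, although reading endpoints as $\inf/\sup$ is fine for bounded non-closed projections.
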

\begin{proof}
Since $A$ and $B$ are convex and symmetric about the $x$-axis, we have $A_x \times \{0\} \subset A$ and $B_x \times \{0\} \subset B$. Scaling preserves this property, so $A'_x \times \{0\} \subset A'$ and $B'_x \times \{0\} \subset B'$, where $A'_x = \mu_A A_x$ and $B'_x = \mu_B B_x$ denote the projections of the scaled sets. For any $(x_1,y_1) \in A'$ and $(x_2,y_2) \in B'$, the distance satisfies
\begin{align*}
    \operatorname{d}(A', B') = \sqrt{(x_1-x_2)^2+(y_1-y_2)^2} \ge |x_1-x_2|,
\end{align*}
and $|x_1-x_2|= \operatorname{d}(A'_x, B'_x)$, which shows $\operatorname{d}(A',B') \ge \operatorname{d}(A'_x, B'_x).$
Conversely, since $A'_x \times \{0\} \subset A'$ and $B'_x \times \{0\} \subset B'$, for any $x_1 \in A'_x$ and $x_2 \in B'_x$, the points $(x_1, 0) \in A'$ and $(x_2, 0) \in B'$ have distance $\|(x_1,0)-(x_2,0)\| = |x_1 - x_2|$. Therefore,
$\operatorname{d}(A',B') \le \inf\{|x_1 - x_2| : x_1 \in A'_x, x_2 \in B'_x\} = \operatorname{d}(A'_x, B'_x)$. Thus, 
$\operatorname{d}(A',B') = \operatorname{d}(A'_x, B'_x).$

The interval $A'_x$ has endpoints $c_{A'} - r_{A'}$ and $c_{A'} + r_{A'}$, and similarly for $B'_x$. The intervals are disjoint if and only if the gap between their closest endpoints is greater than zero. Without loss of generality, assume $c_{A'} \le c_{B'}$, then the gap is 
$$(c_{B'} - r_{B'}) - (c_{A'} + r_{A'}) = (c_{B'} - c_{A'}) - (r_{A'} + r_{B'}).$$ 
By symmetry, if $c_{B'} \le c_{A'}$, the gap is $(c_{A'} - c_{B'}) - (r_{A'} + r_{B'})$. Thus, in general, and given that $\operatorname{d}(A'_x, B'_x)\geq0$,
\begin{align*}
    \operatorname{d}(A'_x, B'_x) = \max\big\{0,\, |c_{A'} - c_{B'}| - (r_{A'} + r_{B'})\big\}=\operatorname{d}(A', B'),
\end{align*}
which yields the stated formula.
\end{proof} 
\begin{remark}[Extension to Multiple Sets.]
Lemma~\ref{thm:distance} extends to the case where one set is a Minkowski sum of symmetric convex sets.  
Let $A \subset \mathbb{R}^2$ be convex and symmetric about the $x$-axis, and set $B = B_1 \oplus \dots \oplus B_n$, where each $B_j$ is symmetric and convex.  
Define $B' := \sum_{j=1}^n \mu_{B_j} B_j$, the distance between $A'$ and $B'$ reduces to
\begin{align}\label{eqn:multiple}
    \operatorname{d}(A',B') = \max&\Big\{0,\, |\mu_A c_A - { \sum_{j=1}^n} \mu_{B_j} c_{B_j}| \\&- \big(|\mu_A| r_A + { \sum_{j=1}^n} |\mu_{B_j}| r_{B_j}\big)\Big\}\nonumber.
\end{align}
\end{remark}

\subsection{Proof Theorem \ref{thm:main}} 
For brevity, we omit explicit $i$ and $\omega$-dependence in intermediate steps. Recalling \eqref{eqn:stability_margin},
\begin{align}
\rho(\gamma)&=\operatorname{d}\!\Big(\operatorname{SRG}(Y_{}(\gamma)),{ \bigcup_{\tau \in (0,1]}} \tau\,\operatorname{SRG}(\widehat{Y}_{\mathrm{grid}})\Big). \label{eqn:step1_main}
\end{align}
By the affine SRG representation of $\operatorname{SRG}(Y_{}(\gamma))$ in~\eqref{eqn:affine_LPV_SRG}, we can express \eqref{eqn:step1_main} as
\begin{align}
\begin{split}
  \rho(\gamma)\geq\operatorname{d}\Big(&\operatorname{SRG}(Y_{0})\oplus { \sum_{\forall k}} \gamma_{k} \operatorname{SRG}(\overline{Y}_{k}),\\&{ \bigcup_{\tau \in (0,1]}} \tau\,\operatorname{SRG}(\widehat{Y}_{\mathrm{grid}})\Big).   
\end{split}
\label{eqn:step2_main}
\end{align}
Using Property \ref{property:shift}, we can reformulate \eqref{eqn:step2_main} as with 
\begin{align}
\begin{split}
  \rho(\gamma)\geq\operatorname{d}\Big(&\operatorname{SRG}(Y_{0})\ominus{ \bigcup_{\tau \in (0,1]}} \tau\,\operatorname{SRG}(\widehat{Y}_{\mathrm{grid}}), \\&-{ \sum_{\forall k}} \gamma_{k} \operatorname{SRG}(\overline{Y}_{k})\Big).   
\end{split}
\label{eqn:step3_main}
\end{align}

Applying Property~\ref{property:triangle}, with $A=\operatorname{SRG}(Y_{0})\ominus\bigcup_{\tau \in (0,1]} \tau\operatorname{SRG}(\widehat{Y}_{\mathrm{grid}})$, $B=-\sum_{\forall k}\gamma_k \operatorname{SRG}( \overline{Y}_{k})$ and $C = \{0\}$, one arrives at
\begin{align*}
\rho(\gamma)&\ge\Big|\underbrace{\operatorname{d}\!\Big(\operatorname{SRG}(Y_{0})\ominus{ \bigcup_{\tau \in (0,1]}} \tau\,\operatorname{SRG}(\widehat{Y}_{\mathrm{grid}}),0\Big)}_{:=\tilde{\rho}_{0}}
\\
&\quad -\operatorname{d}\Big(-{\sum_{\forall k}}\gamma_k \operatorname{SRG}( \overline{Y}_{k}),\ 0\Big)\Big|.
\end{align*}

Using \eqref{eqn:multiple}, the last term can be expressed as
\begin{align*}
\operatorname{d}\!\Big({ \sum_{\forall k}}\gamma_k &\operatorname{SRG}( \overline{Y}_{k}),\ 0\Big)\\
&=\max\Set{0,\Big| { \sum_{\forall k}}c_{k}(\omega) \gamma_{k}\Big|-{ \sum_{\forall k}}r_{k}(\omega) \abs{\gamma_{k}}}.
\end{align*}
Therefore, a sufficient condition for $\rho_i(\gamma,\mathrm{j}\omega) > 0$ is 
\begin{align*}
    \tilde{\rho}_{0}(\omega) >\max\Set{0,\Big| { \sum_{\forall k}}c_{k}(\omega) \gamma_{k}\Big|-{ \sum_{\forall k}}r_{k}(\omega) \abs{\gamma_{k}}},
\end{align*}
for all $\omega$, and $\mathcal{L}_2$ stability follows from the definition of the stability margin~\eqref{eqn:stability_margin}.


\section{Admittance Models of GFL and GFM Converters}
\label{Appx:GFL_admittance}

This appendix derives the admittance matrices of GFL and GFM converters with classical control loops, following the architecture in Fig.~\ref{fig:control_diagram_GFM_GFL} and the approach of~\cite{Huang2024_Howmany}.
Admittance models are obtained by combining filter dynamics with control equations in the local $dq$-frame and then transforming to global coordinates for system-level analysis. The controllers $G_{\rm CC}(s)$,$G_{\rm PC}(s)$, $G_{\rm VC}(s)$, $M(s)$ and $G_{\rm pll}(s)$ are specified in Table \ref{tab:controllers}. All variables shown in red in Fig.~\ref{fig:control_diagram_GFM_GFL} are measurements with subscript $dq$ and can be expressed as $x_{dq}=[x_d,x_q]^\top$, while those with superscript $*$ denote reference values.

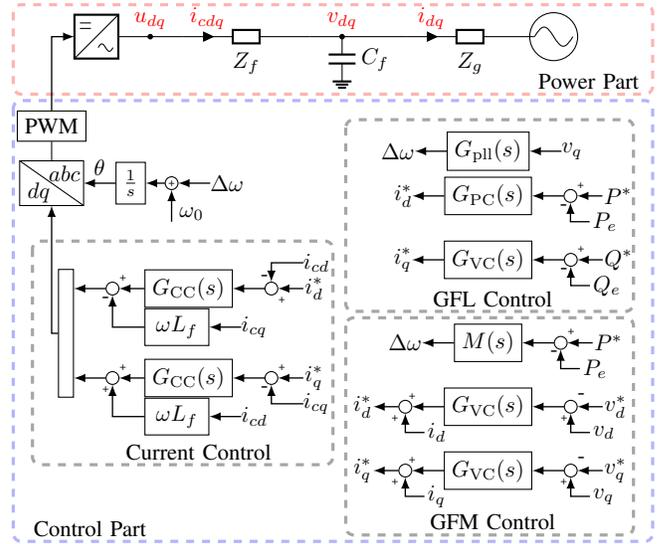
\begin{figure}[ht]
    \centering
\begin{tikzpicture}[scale=0.85, every node/.style={transform shape}, 
sum/.style = {circle, draw, minimum size=4mm, inner sep=0pt}]
\ctikzset{capacitors/height=.3, capacitors/width=0.1}

 \filldraw[color=red!30, fill=red!0,dashed, very thick,rounded corners=3] (-0.5,.5) rectangle (9.5,-0.9);    
 \filldraw[color=blue!30, fill=blue!0,dashed, very thick,rounded corners=3] (-0.5,-1) rectangle (9.5,-7.9);    
\filldraw[color=gray!70, fill=blue!0,dashed, very thick,rounded corners=3] (-0.2,-3.2) rectangle (4.5,-6.7); 
\filldraw[color=gray!70, fill=blue!0,dashed, very thick,rounded corners=3] (4.7,-4.4) rectangle (9.2,-7.8);  
\filldraw[color=gray!70, fill=blue!0,dashed, very thick,rounded corners=3] (4.7,-1.3) rectangle (9.2,-4.3);  
\draw (0.8,0)node[sacdcshape,scale=-.7,color=black,fill=white] (conv){};

\node at (2.4,-6.5) {Current Control};
\node at (7,-7.6) {GFM Control};
\node at (7,-4.1) {GFL Control};
\node at (0.7,-7.7) {Control Part};
\node at (8.5,-.7) {Power Part};

\draw(conv.west) -- ++(1,0) coordinate(mid1)node[midway, above] {\ebr{$u_{dq}$}};
\filldraw[black] (conv.west) ++(0.5,0) circle (1pt) node[anchor=west]{};

\draw (mid1) to[twoport,bipoles/twoport/width=0.3,a=$Z_f$,bipoles/twoport/height=0.15, fill=white,i>^=\ebr{$i_{cdq}$}] ++(2,0) coordinate(mid2);
\draw (mid2) -- ++(0.5,0) coordinate(mid3) node[above] {\ebr{$v_{dq}$}};
\filldraw[black] (mid3) ++(0,0) circle (1pt) node[anchor=west]{};
\draw (mid3) to[scale=1, capacitor, l=$C_f$] ++(0,-0.7) coordinate(c1);
\draw (c1)--++ (0,-0.001) node[scale=0.7,tlground]{};
\draw (mid3) -- ++(1,0) coordinate(mid4);
\draw (mid4) to[twoport,bipoles/twoport/width=0.3,a=$Z_g$,bipoles/twoport/height=0.15, fill=white,i>^=\ebr{$i_{dq}$}] ++(2,0) coordinate(mid5);
\draw (mid5.west) -- ++(0.01,0) coordinate(mid6);
\pic[scale=0.15] at (mid6) {v_sin};
\node[draw,rectangle,minimum width=0.4cm, minimum height=0.25cm] (PWM) at (0.1,-1.4){PWM};
\draw[-latex, line width = .5 pt] (PWM.north) |- (conv.east);
\draw (PWM.south) -- ++(0,-0.3) coordinate(dq_abc) ;
\pic[scale=1] at (dq_abc) {abc_dq};
\draw (dq_abc.east) ++(0.5,-0.35) coordinate(int_thetta) ;
\draw[latex-, line width = .5 pt] (int_thetta) --++ (.5,0) coordinate(int_thetta2) node[midway, above] {$\theta$};
\node[draw,rectangle,minimum width=0.25cm, minimum height=0.4cm,anchor=west] (omega) at (int_thetta2){$\tfrac{1}{s}$};
\draw[latex-, line width = .5 pt] (omega.east) -- ++(0.3,0) coordinate(omega_2) ;
\node[sum,anchor=west,scale=0.5] (sum1) at (omega_2) {+};
\draw[latex-, line width = .5 pt] (sum1.east) -- ++(0.5,0) coordinate(omega_3) node[near end, right] {$\Delta\omega$} ;
\draw[latex-, line width = .5 pt] (sum1.south) -- ++(0,-0.5) coordinate(omega_21) node[near end, right] {$\omega_0$} ;

\draw(dq_abc) ++(0,-0.7) coordinate(out_cc) ;
\draw[latex-, line width = .5 pt] (out_cc) --++(0,-2)-- ++(0.1,0) coordinate(v_ref) ;
\node[draw,rectangle,minimum width=0.01cm, minimum height=2cm,anchor=west] (vector) at (v_ref){};
\draw[latex-, line width = .5 pt] (vector.east)++(0,0.7) --++(0.5,0) coordinate(vd_ref) node[pos=.995,below,sloped] () {-};
\node[sum,anchor=west,scale=0.5] (sum2) at (vd_ref) {};
\draw[latex-, line width = .5 pt] (sum2.south) -- ++(0,-0.5)--++(0.5,0) coordinate(wLf_1) ;
\node[draw,rectangle,minimum width=1cm, minimum height=0.5cm,anchor=west] (wlf_icq) at (wLf_1){$\omega L_f$};
\draw[latex-, line width = .5 pt] (wlf_icq.east) --++(0.5,0)node[near end, right] {$i_{cq}$};
\draw[latex-, line width = .5 pt] (sum2) --++(0.5,0) coordinate(vd_ref1) node[pos=.15,above,sloped] () {\tiny+};
\node[draw,rectangle,minimum width=1cm, minimum height=0.5cm,anchor=west] (cc_d) at (vd_ref1){$G_{\rm CC}(s)$};
\draw[latex-, line width = .5 pt] (cc_d.east)--++(0.5,0) coordinate(vd_ref2) node[pos=.995,above,sloped] () {-};
\node[sum,anchor=west,scale=0.5] (sum3) at (vd_ref2) {} ;
\draw[latex-, line width = .5 pt] (sum3) --++(0.5,0) coordinate(vd_ref2) node[near end, right] {$i_{d}^*$} node[pos=.15,below,sloped] () {\tiny +};
\draw[latex-, line width = .5 pt] (sum3.north) -- ++(0,0.3) --++(0.5,0) coordinate(icd) node[near end, right] {$i_{cd}$} ;
\draw[latex-, line width = .5 pt] (vector.east)++(0,-.7) --++(0.5,0) coordinate(vd_ref)  node[pos=.995,below,sloped] () {\tiny+} ;
\node[sum,anchor=west,scale=0.5] (sum2) at (vd_ref) {};
\draw[latex-, line width = .5 pt] (sum2.south) -- ++(0,-0.5)--++(0.5,0) coordinate(wLf_1) ;
\node[draw,rectangle,minimum width=1cm, minimum height=0.5cm,anchor=west] (wlf_icq) at (wLf_1){$\omega L_f$};
\draw[latex-, line width = .5 pt] (wlf_icq.east) --++(0.5,0)node[near end, right] {$i_{cd}$};
\draw[latex-, line width = .5 pt] (sum2) --++(0.5,0) coordinate(vd_ref1) node[pos=.15,above,sloped] () {\tiny+};
\node[draw,rectangle,minimum width=1cm, minimum height=0.5cm,anchor=west] (cc_d) at (vd_ref1){$G_{\rm CC}(s)$};
\draw[latex-, line width = .5 pt] (cc_d.east)--++(0.5,0) coordinate(vd_ref2) node[pos=.995,below,sloped] () {-};
\node[sum,anchor=west,scale=0.5] (sum3) at (vd_ref2) {};
\draw[latex-, line width = .5 pt] (sum3) --++(0.5,0) coordinate(vd_ref2) node[near end, right] {$i_{q}^*$}node[pos=.15,above,sloped] () {\tiny +};
\draw[latex-, line width = .5 pt] (sum3.south) -- ++(0,-0.3)--++(0.5,0) coordinate(icd) node[near end, right] {$i_{cq}$};

\draw(omega_3) ++(3.7,0.5) coordinate(GFL) ;
\node[draw,rectangle,minimum width=1cm, minimum height=0.5cm,anchor=west] (G_pll) at (GFL){$G_{\rm pll}(s)$};
\draw[latex-, line width = .5 pt] (G_pll.east)--++(0.5,0) coordinate(gpll_input) node[near end, right] {$v_{q}$};;
\draw[-latex, line width = .5 pt] (G_pll.west)--++(-0.5,0) coordinate(gpll_output) node[near end, left] {$\Delta \omega$} ;

\draw(G_pll) ++(0,-.7) coordinate(G_pll2) ;
\node[draw,rectangle,minimum width=1cm, minimum height=0.5cm] (G_pc) at (G_pll2){$G_{\rm PC}(s)$};
\draw[latex-, line width = .5 pt] (G_pc.east)--++(0.5,0) coordinate(gpc_input) node[pos=.995,below,sloped] () {-};
\node[sum,anchor=west,scale=0.5] (sum3) at (gpc_input) {};
\draw[latex-, line width = .5 pt] (sum3.south) -- ++(0,-0.3)--++(0.3,0) coordinate(icd) node[near end, right] {$P_{e}$};
\draw[latex-, line width = .5 pt] (sum3) --++(0.5,0) coordinate(vd_ref2) node[near end, right] {$P^{*}$} node[pos=.15,above,sloped] () {\tiny+};
\draw[-latex, line width = .5 pt] (G_pc.west)--++(-0.5,0) coordinate(gpc_output) node[near end, left] {$i_{d}^*$};

\draw(G_pc) ++(0,-1) coordinate(G_pc2) ;
\node[draw,rectangle,minimum width=1cm, minimum height=0.5cm] (G_vc) at (G_pc2){$G_{\rm VC}(s)$};
\draw[latex-, line width = .5 pt] (G_vc.east)--++(0.5,0) coordinate(gvc_input) node[pos=.995,below,sloped] () {-};
\node[sum,anchor=west,scale=0.5] (sum3) at (gvc_input) {};
\draw[latex-, line width = .5 pt] (sum3.south) -- ++(0,-0.3)--++(0.3,0) coordinate(icd) node[near end, right] {$Q_{e}$};
\draw[latex-, line width = .5 pt] (sum3) --++(0.5,0) coordinate(vd_ref2) node[near end, right] {$Q^*$}  node[pos=.15,above,sloped] () {\tiny+};
\draw[-latex, line width = .5 pt] (G_vc.west)--++(-0.5,0) coordinate(gvc_output) node[near end, left] {$i_{q}^*$};

\draw(G_pc2) ++(0,-1.3) coordinate(GFL) ;
\node[draw,rectangle,minimum width=1cm, minimum height=0.5cm] (G_pll) at (GFL){$M(s)$} ;
\draw[latex-, line width = .5 pt] (G_pll.east)--++(0.5,0) coordinate(gpll_input)  node[pos=.995,below,sloped] () {-};
\node[sum,anchor=west,scale=0.5] (sum3) at (gpll_input) {};
\draw[latex-, line width = .5 pt] (sum3.south) -- ++(0,-0.3)--++(0.3,0) coordinate(icd) node[near end, right] {$P_{e}$};
\draw[latex-, line width = .5 pt] (sum3) --++(0.5,0) coordinate(vd_ref2) node[near end, right] {$P^{*}$} node[pos=.15,above,sloped] () {\tiny+};
\draw[-latex, line width = .5 pt] (G_pll.west)--++(-0.5,0) coordinate(gpll_output) node[near end, left] {$\Delta \omega$};

\draw(G_pll) ++(0,-1) coordinate(G_pll2) ;
\node[draw,rectangle,minimum width=1cm, minimum height=0.5cm] (G_pc) at (G_pll2){$G_{\rm VC}(s)$};
\draw[latex-, line width = .5 pt] (G_pc.east)--++(0.5,0) coordinate(gpc_input) node[pos=.995,below,sloped] () {\tiny +};
\node[sum,anchor=west,scale=0.5] (sum3) at (gpc_input) {};
\draw[latex-, line width = .5 pt] (sum3.south) -- ++(0,-0.3)--++(0.3,0) coordinate(icd) node[near end, right] {$v_{d}$};
\draw[latex-, line width = .5 pt] (sum3) --++(0.5,0) coordinate(vd_ref2) node[near end, right] {$v_d^{*}$} node[pos=.15,above,sloped] () {-};
\draw[-latex, line width = .5 pt] (G_pc.west)--++(-0.5,0) coordinate(gpc_output) node[pos=.995,above,sloped] () {\tiny+};
\node[sum,anchor=east,scale=0.5] (sum3) at (gpc_output) {};
\draw[latex-, line width = .5 pt] (sum3.south) -- ++(0,-0.3)--++(0.3,0) coordinate(icd) node[near end, right] {$i_{d}$};
\draw[-latex, line width = .5 pt] (sum3) --++(-0.5,0) coordinate(vd_ref2) node[near end, left] {$i_{d}^*$} node[pos=.15,below,sloped] () {\tiny+};

\draw(G_pc) ++(0,-1) coordinate(G_pc2) ;
\node[draw,rectangle,minimum width=1cm, minimum height=0.5cm] (G_vc) at (G_pc2){$G_{\rm VC}(s)$};
\draw[latex-, line width = .5 pt] (G_vc.east)--++(0.5,0) coordinate(gvc_input) node[pos=.995,below,sloped] () {\tiny +};
\node[sum,anchor=west,scale=0.5] (sum3) at (gvc_input) {};
\draw[latex-, line width = .5 pt] (sum3.south) -- ++(0,-0.3)--++(0.3,0) coordinate(icd) node[near end, right] {$v_{q}$};
\draw[latex-, line width = .5 pt] (sum3) --++(0.5,0) coordinate(vd_ref2) node[near end, right] {$v_q^{*}$} node[pos=.15,above,sloped] () {-};
\draw[-latex, line width = .5 pt] (G_vc.west)--++(-0.5,0) coordinate(gvc_output) node[pos=.995,above,sloped] () {\tiny+};
\node[sum,anchor=east,scale=0.5] (sum3) at (gvc_output) {};
\draw[latex-, line width = .5 pt] (sum3.south) -- ++(0,-0.3)--++(0.3,0) coordinate(icd) node[near end, right] {$i_{q}$};
\draw[-latex, line width = .5 pt] (sum3) --++(-0.5,0) coordinate(vd_ref2) node[near end, left] {$i_{q}^*$}node[pos=.15,below,sloped] () {\tiny +};


\end{tikzpicture}

    \caption{GFL and GFM control structures, where $Z_f=R_f+\textup{j}\omega L_f$. }
    \label{fig:control_diagram_GFM_GFL}
    \vspace{-0.6cm}
\end{figure}

\subsection{Admittance Model of GFL Converters}
We derive the GFL (PLL-based) admittance matrix in Fig.~\ref{fig:control_diagram_GFM_GFL}, modeled in the controller’s rotating $dq$-frame.  The filter dynamics are 
\begin{align}
{  u_{dq}^*} -   v_{dq} = (sL_f + \textup{j}\omega L_f+R_f)  i_{cdq}, \label{eqn:GFL_Lf}
\end{align}
and the current controller is
\begin{align}
{  u_{dq}^*} = {G}_{\rm CC}(s)(  i_{dq}^{*}-  i_{cdq}) + \textup{j}\omega L_f  i_{cdq} . \label{eqn:GFL_current_control}
\end{align}
Combining \eqref{eqn:GFL_Lf}–\eqref{eqn:GFL_current_control} gives $$G_1(s)  i_{dq}^{*} - G_2(s)  v_{dq} =   i_{cdq} ,$$
with
\begin{align*}
G_1(s)=\tfrac{{G}_{\rm CC}(s)}{sL_f+R_f+{G}_{\rm CC}(s)}, \; 
G_2(s)=\tfrac{1}{sL_f+R_f+{G}_{\rm CC}(s)}. 
\end{align*}

Outer power/voltage loops set the current reference as
\begin{align*}
i_{cd}^{*}={G}_{\rm PC}(s)(P^{*}-P_E), \;
i_{cq}^{*}={G}_{\rm VC}(s)(Q^{*}-Q_E), 
\end{align*}
with $P_E=v_{d} i_{cd}+v_{q} i_{cq}$ and $Q_E=v_{d} i_{cq}-v_{q} i_{cd}$.  Linearizing around an operation point $(i_{d0},i_{q0},v_{d0},v_{q0})$  yields
\begin{align}
-\begin{bmatrix}\Delta i_{cd}\\ \Delta i_{cq}\end{bmatrix}
= \begin{bmatrix} Y_{11}(s) & Y_{12}(s) \\ Y_{21}(s) & Y_{22}(s) \end{bmatrix}
\begin{bmatrix}\Delta v_{d} \\ \Delta v_{q}\end{bmatrix}. \label{eqn:GFL_admittance}
\end{align}
where, $v_{q0}=0$, $v_{d0}=1$, and
\begin{align*}
    Y_{11}&=\tfrac{G_1(s)G_{\rm PC}(s)i_{d0}+G_2(s)}{1+G_1(s)G_{\rm PC}(s)}, &
    Y_{12}&=\tfrac{G_1(s)G_{\rm PC}(s)i_{q0}}{1+G_1(s)G_{\rm PC}(s)},\\
    Y_{21}&=\tfrac{G_1(s)G_{\rm VC}(s)i_{q0}}{1+G_1(s)G_{\rm VC}(s)}, &
    Y_{22}&=\tfrac{-G_1(s)G_{\rm VC}(s)i_{d0}+G_2(s)}{1+G_1(s)G_{\rm VC}(s)}.
\end{align*}
To refer \eqref{eqn:GFL_admittance} to the global $dq$-frame, we consider the PLL dynamics as
\begin{align}
\Delta\theta = \tfrac{{G}_{\rm pll}(s)}{s}\Delta v_{q} , \label{eqn:GFL_PLL}
\end{align}
which couples the converter to the global frame.  The admittance in \eqref{eqn:GFL_admittance} is expressed in the local $dq$-frame synchronized by the PLL.  To derive a system model valid in a common reference, we transform to a global frame rotating at the fixed angular frequency $\omega_0$.  Let ${\theta_i}=\theta-\theta_G$ denote the phase difference between local and global coordinates.  The voltage and current perturbations transform as
\begin{align}
\begin{bmatrix}\Delta v'_{d}\\ \Delta v'_{q}\end{bmatrix}
= {J(\theta_i)}\!\left(\begin{bmatrix}\Delta v_{d}\\ \Delta v_{q}\end{bmatrix}
+ \begin{bmatrix}-v_{q0}\\ v_{d0}\end{bmatrix}\Delta\delta \right), \label{eqn:V_global}
\end{align}
\begin{align}
\begin{bmatrix}\Delta i'_{cd}\\ \Delta i'_{cq}\end{bmatrix}
= {J(\theta_i)}\!\left(\begin{bmatrix}\Delta i_{cd}\\ \Delta i_{cq}\end{bmatrix}
+ \begin{bmatrix}-i_{q0}\\ i_{d0}\end{bmatrix}\Delta\delta \right), \label{eqn:I_global}
\end{align}
where  $\Delta\delta=\Delta\theta$ from the PLL dynamics \eqref{eqn:GFL_PLL}.   Combining \eqref{eqn:V_global}–\eqref{eqn:I_global} with \eqref{eqn:GFL_PLL}, we arrive to the GFL admittance in the global $dq$-frame.
\begin{align*}
\text{-}\begin{bmatrix}\Delta i'_{cd}\\ \Delta i'_{cq}\end{bmatrix}
= {J(\theta_i)}
\begin{bmatrix}
Y_{11}(s) & \tfrac{sY_{12}(s)+{G}_{\rm pll}(s)i_{q0}}{s+{G}_{\rm pll}(s)} \\
Y_{21}(s) & \tfrac{sY_{22}(s)-{G}_{\rm pll}(s)i_{d0}}{s+{G}_{\rm pll}(s)}
\end{bmatrix}
{J(\text{-}\theta_i)}\begin{bmatrix}\Delta v'_{d}\\ \Delta v'_{q}\end{bmatrix}. 
\end{align*}
Observe that $Y_{11}(s)$ and $Y_{22}(s)$ decompose into terms involving $i_{d0}$ and terms independent of any component. In contrast, $Y_{12}(s)$ and $Y_{21}(s)$ depend exclusively on $i_{q0}$, as shown below, which can be rewritten as $Y_0(s)$, $Y_1(s)$ and $Y_2(s)$.
\begin{align*}
Y_0\hspace{-0.1cm}&=\hspace{-0.15cm}\begin{bmatrix}
    \tfrac{G_2(s)}{1+G_1(s)G_{\rm PC}(s)}     & 0 \\ 0      & \tfrac{sG_2(s)}{(1+G_1(s)G_{\rm VC}(s))(s+{G}_{\rm pll})}
    \end{bmatrix}\\
Y_1\hspace{-0.1cm}&=\hspace{-0.15cm}\begin{bmatrix}
    \tfrac{G_1(s)G_{\rm PC}}{1+G_1(s)G_{\rm PC}(s)}     & 0 \\ 0      & \tfrac{-s(G_1(s)G_{\rm VC}(s))}{(1+G_1(s)G_{\rm VC}(s))(s+{G}_{\rm pll})}-\tfrac{{G}_{\rm pll}}{(s+{G}_{\rm pll})}
    \end{bmatrix}\\
Y_2\hspace{-0.1cm}&=\hspace{-0.15cm}\begin{bmatrix}
        0 &\tfrac{s(G_1(s)G_{\rm PC}(s))}{(s+{G}_{\rm pll}(s))(1+G_1(s)G_{\rm PC}(s))} +\tfrac{{G}_{\rm pll}(s)}{s+{G}_{\rm pll}(s)}\\ 
        \tfrac{G_1(s)G_{\rm VC}(s)}{1+G_1(s)G_{\rm VC}(s)} & 0
    \end{bmatrix}\hspace{-0.15cm}.    
\end{align*}
\subsection{Admittance Model of GFM Converters}
\label{Appx:GFc_Admittance}
For the GFM converter in Fig.~\ref{fig:control_diagram_GFM_GFL}, consider \eqref {eqn:GFL_Lf} and
\begin{align}
  i_{cdq} -   i_{dq} &= (sC_f + \textup{j}\omega C_f)\,  v_{dq} .
\label{eqn:GFM_filters}
\end{align}

The voltage control loop is
\begin{align}
  i_{cdq}^* \;=\; G_{\rm VC}(s)\,(  v_{dq}^{*}-  v_{dq}) + \textup{j}\omega C_f  v_{dq} +   i_{dq} .
\label{eqn:GFM_voltage_control}
\end{align}
Finally, the synchronization loop is taken as
\begin{align}
\Delta \delta \;=\; M(s)\,\Delta P_E,
\quad
M(s) \;:=\; \tfrac{1}{J s^{2}+D s}.
\label{eqn:GFM_freq_loop}
\end{align}

By linearizing around an operation point , with $v_{q0}=0$, and eliminating $  u_{dq}^*$ and $  i_{cdq}$ from \eqref{eqn:GFL_Lf},  \eqref{eqn:GFL_current_control}, \eqref{eqn:GFM_filters}, and \eqref{eqn:GFM_voltage_control} yields a local admittance $Y_0(s)$ such that, in the local $dq$ frame,
\begin{align*}
\text{-}
\begin{bmatrix}\Delta i_{cd}\\ \Delta i_{cq}\end{bmatrix}
=
\begin{bmatrix}\tfrac{G_1G_{\rm VC}+sC_fG_1+G_2}{1-G_1}&0\\0&\tfrac{G_1G_{\rm VC}+sC_fG_1+G_2}{1-G_1}\end{bmatrix}
\begin{bmatrix}\Delta v_{d}\\ \Delta v_{q}\end{bmatrix},
\end{align*}

The active power linearization around $(v_{d0},0,i_{d0},i_{q0})$ is
\begin{align}
\Delta P_E \approx v_{d0}\,\Delta i_{cd} + i_{d0}\,\Delta v_d + i_{q0}\,\Delta v_q .
\label{eq:S4}
\end{align}

Substituting  \eqref{eq:S4} into \eqref{eqn:GFM_freq_loop}  and then into \eqref{eqn:V_global}, gives
\begin{align}
\Delta\delta=\tfrac{-\,(i_{d0} - v_{d0} Y_0(s))\,\Delta v'_d \;\;-\;\; i_{q0}\,\Delta v'_q}{\,M(s)-i_{q0}v_{d0}\,}.
\label{eq:S6_correct}
\end{align}

Finally, substituting \eqref{eq:S6_correct} into \eqref{eqn:I_global}:
\begin{align*}
\Delta i'_{cd}&= -Y_0\,\Delta v'_d - i_{q0}\,\Delta\delta, \\
\Delta i'_{cq}&= -Y_0\,\Delta v'_q + \big(Y_0 v_{d0}+i_{d0}\big)\,\Delta\delta. 
\end{align*}
Then, the admittance in global coordinates is defined by
\begin{align*}
-
\begin{bmatrix}\Delta i'_{cd}\\ \Delta i'_{cq}\end{bmatrix}
=
{J(\theta_i)}Y_{\rm gfm}(s){J(-\theta_i)}
\begin{bmatrix}\Delta v'_d\\ \Delta v'_q\end{bmatrix},
\end{align*}
with
\begin{align}
Y_{\rm gfm}(s)=
\begin{bmatrix}
  Y_0 -  \tfrac{i_{q0}(i_{d0} - v_{d0} Y_0)}{M(s)-i_{q0}v_{d0}}
&
 - \tfrac{i_{q0}^{2}}{M(s)-i_{q0}v_{d0}}
\\
 \tfrac{(Y_0 v_{d0}+i_{d0})(i_{d0} - v_{d0} Y_0)}{M(s)-i_{q0}v_{d0}}
&
 Y_0 +  \tfrac{(Y_0 v_{d0}+i_{d0})i_{q0}}{M(s)-i_{q0}v_{d0}}
\end{bmatrix}
\label{eq:Ygfm_exact}
\end{align}
$Y_{\rm gfm}(s)$ is not affine LPV in $(i_{d0},i_{q0})$ due to the term $(M(s)-i_{q0}v_{d0})^{-1}$. When $i_{q0}=0$, the coupling vanishes and an LPV form in $i_{d0}$ is attainable. For the general case with $i_{q0}\neq 0$, we approximate the factor $ \big(M(s)-i_{q0}v_{d0}\big)^{-1} $ in the denominator via a shifted geometric series,  as follows
\begin{align}
    \tfrac{1}{M(s)-i_{q0}v_{d0}}\
= {\sum_{k=0}^{N_i}}\tfrac{(i_{q0}v_{d0}+\alpha_i)^{k}}{\big(M(s)+\alpha_i\big)^{k+1}}=:G_i \label{eqn:approx}
\end{align}
which replacing \eqref{eqn:approx} in \eqref{eq:Ygfm_exact} leads to
\begin{align*}
Y_{\rm gfm}(s) \approx\hspace{7.5cm}\nonumber\\
\begin{bmatrix}
 Y_0 - i_{q0}\,\big( i_{d0} - \! v_{d0} Y_0\big)\,G_i &
 -\,i_{q0}^{2}\,G_i \\
  \big(Y_0 v_{d0} + i_{d0}\big)\big( i_{d0} -  v_{d0} Y_0\big)G_i &
 Y_0 + \big(Y_0 v_{d0} + i_{d0}\big)\,i_{q0}\,G_i 
\end{bmatrix}.
\end{align*}

To achieve a better approximation in our example for the parameters in Table \ref{tab:parameters}, we chose $(\alpha_i, N_i)=(-15,\,1)$. The LPV parameter derivation is omitted for brevity.


%
\section{Controller Definitions and Parameters} \label{appendix:GFL_GFM}
The controllers used in both GFL and GFM converters are summarized in Table~\ref{tab:controllers}.  
\begin{table}[ht]
\centering
\caption{Controller transfer functions.}
\label{tab:controllers}
\begin{tabular}{ll}
\toprule
\textbf{Controller} & \textbf{Transfer Function} \\
\midrule
Current control (GFL, GFM) & ${G}_{\rm CC}(s)=K_{\rm CCP}+K_{\rm CCI}/s+\textup{j}\omega L_f$, \\
Power control (GFL) & ${G}_{\rm PC}(s)=K_{\rm PCP}+K_{\rm PCI}/s$,\\
PLL (GFL) & ${G}_{\rm pll}(s)=K_{\rm PLLP}+K_{\rm PLLI}/s$ \\
Voltage control (GFL, GFM) & ${G}_{\rm VC}(s)=K_{\rm VCP}+K_{\rm VCI}/s$\\
Frequency control (GFM) & ${M}(s)=\tfrac{1}{Js^2+Ds}$ \\
\midrule
Filter parameters & $R_f=0.05$, \; $C_f=0.06$, $L_f=0.05$\\
$Z_2=R_2+\textup{j}\omega L_2$ & $L_2=0.05$, \; $R_2=0.05$\\
$Z_3=R_3+\textup{j}\omega L_3$ & $L_3=0.1$, \; $R_3=0.1$\\
$Z_g=R_g+\textup{j}\omega L_g$ & $L_g=0.5$, \; $R_g=0.5$\\
\bottomrule
\end{tabular}
\end{table}

Integral and proportional constants are defined in Table \ref{tab:parameters}.
\begin{table}[ht]
\centering
\caption{Control converter parameters}
\label{tab:parameters}
\begin{tabular}{llll}
\toprule
\textbf{Type} & \textbf{Controller} & \textbf{Proportional} & \textbf{Integral} \\
\midrule
& Current control  & $0.06$   & $28.27$ \\
GFL & Power control    & $31.41$   & $246.74$ \\
 & Voltage control  & $31.41$   & $246.74$ \\
 & PLL              & $402.12$  & $4042.6$ \\
\midrule
GFM & Voltage control  & $0.031$   & $0.4$ \\
 & Current control  & $0.06$   & $28.27$ \\
 & Frequency control & $J=0.02$   & $D=0.1$ \\
\bottomrule
\end{tabular}
\end{table}

For the IEEE 14-node system, the converter parameters are given in Table~\ref{tab:parameters}. The PLL controller configuration is varied adjusting the PLL cutoff frequency. Specifically, we consider $f_{\text{pll}}=\{5,10,20,60\}$ for the converter in nodes 2,3,6, and 8, respectively. The selected $f_{\text{pll}}$ determines the proportional and integral gain according to $K_{\rm PLLP}=\omega_{\text{pll}}$ and $K_{\rm PLLI}=K_{\rm PLLP}^2/4$. Each GFL converter uses the corresponding PLL parameters listed in Table~\ref{tab:parameters_14node}.
\begin{table}[ht]
\centering
\caption{Control converter parameters for IEEE 14 node system}
\label{tab:parameters_14node}
\begin{tabular}{llll}
\toprule
\textbf{Converter} & \textbf{Proportional} & \textbf{Integral} \\
\midrule
 GFL 2  & $31.41$   & $246.74$ \\
GFL 3    & $62.83$   & $986.96$ \\
 GFL 6  & $125.66$   & $3947.7$ \\
 GFL 8  & $376.99$  & $35530.57$ \\
\bottomrule
\end{tabular}
\end{table}

\end{document}